\newcolumntype{L}[1]{>{\raggedright\arraybackslash }b{#1}}
\newcolumntype{C}[1]{>{\centering\arraybackslash }b{#1}}
\newcommand{\In}{{\mathrm {\,In}}}
\newcommand{\diam}{{\mathrm{diam}}}
\newcommand{\SAP}{SAP_{g}}
\newcommand{\FP}{SAP_{\lambda x.M}}
\newcommand{\MM}{\textit{MinMax}}
\newcommand{\IN}{\mathds{N}}
\newcommand{\dG}{\mathds{G}}
\newcommand{\MZ}{M}                %
\newcommand{\e}{{\mathrm{e}}}
\newcommand{\rad}{{\mathrm{rad}}}
\newcommand{\Roo}{{\mathrm{Roots}}}
\renewcommand{\leq}{\leqslant}
\renewcommand{\geq}{\geqslant}
\newcommand{\eqdef}{\stackrel{\text{def}}{=}}
\newcommand{\V}{{\mathcal V}}
\newcommand{\Z}{Z}
\newtheorem{thm}{Theorem}
\newtheorem{lem}[thm]{Lemma}
\newtheorem{cor}[thm]{Corollary}
\newcommand{\Clmod}[1]{\big [\, {#1} \, \big ]}
\newcommand{\clmod}[2]{[ {#1} ]_{_{#2}}}
\newcommand{\nmin}[2]{ {#1}_{_{\, #2}}^{^{+}}}
\newcommand{\ignore}[1]{}
\date{\today}
\begin{document}

\title{Self-Stabilizing Clock Synchronization in Dynamic Networks 
}

 \author{%
 Bernadette Charron-Bost\textsuperscript{1} 
  \and Louis Penet de Monterno\textsuperscript{2}}
 \date{\textsuperscript{1} DI ENS, \'Ecole Normale Sup\'erieure, 75005 Paris, France\\
 \textsuperscript{2}  \'Ecole polytechnique, IP Paris, 91128 Palaiseau, France\\~\\
 \today
  }

\maketitle

\begin{abstract}
We consider the fundamental problem of clock synchronization in a synchronous multi-agent system.
Each agent holds a clock with an arbitrary initial value, and clocks must eventually indicate the same value.
Previous algorithms worked in static networks with drastic
	connectivity properties and assumed that global  information is available at each agent.
In this paper, we propose different solutions for time-varying topologies that  require neither strong connectivity nor 
	any  global  knowledge on the network.

First, we study the case of unbounded clocks, and propose a self-stabilizing \emph{MinMax} algorithm 
	 that works if, in each sufficiently long but bounded period of time, there is an agent, called a root, 
	 that can send messages, possibly indirectly, to all other agents. 
Such networks are highly dynamic in the sense that roots may change arbitrarily over time.
Moreover, the bound on the time required for achieving this rootedness property is unknown to the agents. 
Then we present a finite-state  algorithm that synchronizes periodic clocks in dynamic networks that are 
	strongly connected over bounded period of time.
Here also, the bound on the time for achieving strong connectivity exists, but  is not supposed to  be known. 
Interestingly, our algorithm  unifies several seemingly different algorithms proposed previously for static networks. 
Next, we show that 	strong connectivity is actually not required: our algorithm still works when the network 
	is just rooted 
	over bounded period of time with a set of roots that becomes stable. 
Finally, we study the time and space complexities of our algorithms, and discuss how initial timing information
	allows for more efficient solutions.
\end{abstract}

\section{Introduction}

There is a considerable interest in distributed systems consisting of multiple, potentially mobile, agents. 
This is mainly motivated  by the emergence of large scale networks, characterized by the lack of centralized
	control, the access to limited information and a time-varying connectivity. 
Control and optimization algorithms deployed in such networks should be completely distributed, 
	relying only on local observations and informations, and robust against unexpected changes in topology
	such as link or node failures.
	
A canonical problem in distributed control is the \emph{clock synchronization problem}:
In a system where each agent is equipped with a local discrete clock, the objective 
	is that all clocks eventually synchronize despite arbitrary initializations.
Clock synchronization is a fundamental problem arising in a number of applications, both in 
	engineering and natural systems.
A synchronized clock is a  basic block used in many engineering systems, e.g., in 
	the universal self-stabilizing algorithm developed by Boldi and Vigna~\cite{BV:dc:02},
	or for deploying distributed algorithms structured into synchronized \emph{phases} 
	(e.g., the \emph{Two-Phase} and \emph{Three-Phase Commit} algorithms~\cite{BHG:book:87}, 
	or many consensus algorithms~\cite{BenOr:podc:83,DLS:jacm:88,Lam:acmtcs:98,CBS:dc:09}).
Clock synchronization also corresponds to an ubiquitous phenomenon in the natural world and finds numerous 
	applications in physics and biology, e.g., the Kuramoto model for the synchronization of coupled 
	oscillators~\cite{Str:phys:00}, synchronous flashing fireflies, or else
	collective synchronization of pancreatic beta cells~\cite{Jad:cacm:12}.  

Our goal in this paper is the design of  distributed  algorithms for the clock synchronization 
	problem in a networked system of $n$ agents that operate in synchronous rounds and communicate by broadcast. 
We consider both cases where agents may have an unbounded number of states and the case of finite state 
	agents and periodic clocks. 
 The network is supposed to be uniform and anonymous, i.e., agents are identical and have no identifiers.
We consider the self-stabilization model  where the initial state of each agent is arbitrary.
In particular, agents do not have a consistent numbering of  the rounds.
Moreover, agents may use only local informations.      

The communication pattern at each round is modeled by a directed graph that may change continually 
	from one round to the next. 
In other words, we allow for time-varying communication graphs, which is important if we want to take 
	into account link failure and link creation, reconfigurable networks, or for dealing with probabilistic 
	communication models  like the rumor spreading models.	
We impose weak assumptions on the communication topology; in particular, we allow for non-bidirectional links and 
	do not assume full connectivity, nor even strong connectivity. 
Indeed, the assumption of strong connectivity  may be too restrictive in various settings:
	for instance, asynchrony and benign agent failures in a fully connected network may be 
	handled by dynamic graphs that are permanently rooted, but not strongly 
	connected~\cite{CBS:dc:09}.
 
\paragraph{Contribution.}
In the case of unbounded memory,  we use the similarity between clock synchronization and consensus:\footnote{%
	Any \emph{averaging algorithm}~\cite{BT89,Mor:ieeetac:05} for consensus -- in particular, the simple \emph{Min}
	and \emph{Max} algorithms -- can be directly translated into an algorithm for clock synchronization 
	by a simple incrementing procedure.}
	$\!\!$we adopt the \emph{MinMax} algorithmic scheme developed in~\cite{CBM:dc:21} for consensus 
	in a networked system with asynchronous starts and a time-varying topology. 
While consensus and clock synchronization  are easy to achieve when 
	the time-varying topology is infinitely often strongly connected over time,  in the sense that for every pair of agents $i$ and $j$
	 there always exists a \emph{temporal path} from $i$ to $j$, 
	there is no obvious solution in the case where only a strict 
	 subset of agents, which may vary over time, can broadcast messages to all other agents.
The \emph{MinMax} scheme has been specifically designed to cope with the lack of strong connectivity.	

In fact, the most challenging point here is the derivation of a self-stabilizing  algorithm, which works with arbitrary 
	initial states instead of only tolerating asynchronous starts.
For that, each agent builds its own view and, despite varying and non-predictive communication delays,
	it succeeds in eliminating the ``garbage'' introduced by the arbitrariness of initial states.
We prove that our self-stabilizing algorithm works under a very weak connectivity assumption: 
	 the network is just required to be \emph{rooted with bounded delay}, i.e., the communication graph 
	 over every period of bounded length has a rooted spanning tree. 
This assumption corresponds to highly dynamic networks since roots and temporal paths may permanently change over time.
Moreover, the bound for guaranteeing rootedness is supposed to exist but is unknown, 
	in the sense  our algorithm does not depend on it.
		
In the second part of the paper, we impose the additional constraint of state finiteness. 
This model provides a good approximation for  networked systems that are subject 
	to communication bandwidth and storage limits. 
We present an algorithm, called \emph{SAP} (for self-adaptive period), and show that it
	solves the mod-$P$ synchronization problem in any time-varying topology that is 
	\emph{strongly connected with bounded delay}, i.e., from any time onward and for every pair of agents $i$ and $j$,
	 there is a temporal path of bounded length connecting $i$ to $j$. 
In other words,  the  \emph{SAP} algorithm works under the sole condition of a finite \emph{dynamic} diameter.\footnote{%
	Observe that the diameter of a static strongly connected network is less than the number of agents,
	while it may be arbitrarily large for a dynamic network.
	This is why the assumption of a bound on the diameter available at each agent may be quite problematic
	in the dynamic setting.}
Its stabilization time is bounded above by  three times the diameter when a bound on the diameter is given, 
	but the algorithm fundamentally  works when no bound is available, with a limited increase of stabilization time. 

Interestingly, the   \emph{SAP} algorithm unifies several  seemingly different algorithms for the synchronization 
	of  periodic clocks in static networks, including  the algorithms in~\cite{ADG:ppl:91,HG:ipl:95,BPV:algorithmica:08} 
	and the one deployed in the finite-state universal self-stabilizing protocol in~\cite{BV:dc:02},
	with useful insights for improving their solvability powers.
In particular, we show that the pioneer algorithm proposed by Arora et al.~\cite{ADG:ppl:91} works for a period~$P \geq 6n$ 
	while the authors proved its correctness only when $P \geq n^2$.

Finally, we study how to relax the strong connectivity assumption, and consider the property
	of \emph{uniform rootedness with bounded delay}:
	the network is rooted with bounded delay and the set of roots is fixed, at least from some time onward.
Neither the  bound on the delay nor the set of root agents are supposed to be known.	
We prove that the  \emph{SAP} algorithm still works under this weaker connectivity assumption thanks to
	a synchronization mechanism  quite different from the one involved in strongly connected networks.
	
\paragraph{Related work} 
Self-stabilizing clocks have been extensively studied in different communication models, under different assumptions,
	and with various problem specifications.
 The pioneer papers by Even and Rajsbaum~\cite{ER:mst:95} and by Gouda and Herman~\cite{GH:ipl:90} 
 	use the \emph{Min} and  \emph{Max} algorithms 
	for the synchronization of unbounded clocks in a fixed strongly connected communication graph.
For periodic clocks, the most closely related pieces of work are the  series of papers~\cite{ADG:ppl:91,HG:ipl:95,BPV:algorithmica:08}
	which, in addition to strong connectivity, all assume that  a bound on the diameter is available.
To the best of our knowledge, only the synchronization algorithm in~\cite{BV:dc:02} for a static communication graph 
	dispenses with the latter assumption.
	
More recently, clock synchronization has been studied in the \emph{Beeping model}~\cite{CK:disc:10} in which
	agents have severely limited communication capabilities:
	given a connected bidirectional communication graph,
	in each round, each agent can either send a ``beep'' to all its neighbors or stay silent.
A self-stabilizing algorithm for clock synchronization has been proposed by Feldmann et al.~\cite{FKS:spaa:20},
	which is optimal both in time and space, but which, unfortunately, requires that a bound on the 
	network size is available for each agent.\footnote{%
	In~\cite{FKS:spaa:20}, Feldmann et al. also proposed an algorithm that does not use any bound on the network size,
	but that only tolerates asynchronous starts, giving another hint that 
	the self-stabilization model is less powerful than the model of  asynchronous starts.
	}
	
There are also numerous results for clock synchronization with faulty agents. 
The fault-tolerant solutions that have been proposed in various failure models,
	including  the  Byzantine failure model, 
	all use algorithmic schemes initially developed for consensus (e.g., see~\cite{Dol:97,DW:jacm:04}).  
They all require a bidirectional connected (most of the time fully connected) network, and the only results 
	for unidirectional networks are about rings.

Clock synchronization has also been studied in the model of \emph{population protocols}~\cite{AAER:dc:07},
	consisting of a set of agents,  interacting  in randomly  chosen  pairs.
In this model, the underlying network is assumed to be fully connected, and the pairwise interactions
	are modeled by bidirectional links.	
Moreover, only stabilization with probability one or with high probability is required.
The same  weakening of problem specification is considered for another popular probabilistic communication model, 
	namely the \emph{PULL} model~\cite{KDG:focs:03},
	 in which, at  each round each agent interacts with one random incoming neighbor in a fixed directed graph~$G$.
Unfortunately,  the self-stabilizing clock synchronization algorithms developed in this model~\cite{BKN:dc:19,BGS:soda:21}
	assume that $G$ is  the complete graph, a non-realistic hypothesis in many settings, in particular for natural systems. 
				
\section{Preliminaries}\label{sec:model}
 
\subsection{The computing model}

We consider a networked system with a fixed and finite set $V$ of agents. 
We assume a round-based computational model in the spirit of the Heard-Of model~\cite{CBS:dc:09}. 
Point-to-point communications are organized into \emph{synchronized rounds}: each node sends messages 
	to all nodes and can receive messages sent  by \emph{some} of the nodes.
Rounds are communication closed in the sense that no node receives messages in round~$t$ that are sent 
	in a round different from~$t$.
Communication  at each round $t$ is thus modeled by a directed graph (digraph) 
	$\dG(t)=(V,E_t)$: $(i,j)\in E_t$ if and only if communication from $i$ to $j$ is enabled at round $t$. 
We assume a self-loop at each node in all these digraphs  since a node can communicate with 
	itself instantaneously.	
The sequence of digraphs~$\dG=\left (\dG(t) \right )_{t \geq 1}$ is called a {\em dynamic graph}. 

An \emph{algorithm}~${\cal A}$ is given by a set ${\cal Q}$ of local states, a set of messages ${\cal M}$,
	a sending function  $\sigma : {\cal Q}  \rightarrow {\cal M}$, and a transition function 
	$\delta : {\cal Q} \times {\cal M}^{\oplus} \rightarrow {\cal Q}$,
	where ${\cal M}^{\oplus} $ is the set of finite multisets over ${\cal M}$.
	
In this paper, we consider the \emph{self-stabilization} model, where all the nodes start to run
	the algorithm synchronously at round one, but their initial states are arbitrary in the set~${\cal Q}$.
An \emph{execution of}~${\cal A}$  with the dynamic graph $\dG$ then proceeds as follows:
In  round~$t$ $(t = 1,2\dots)$, every node applies the sending function~$\sigma$  to its current state to generate 
	the message to be sent to all nodes,
	then it receives the messages sent by its incoming neighbors in the digraph~$\dG(t)$, and finally 
	applies the transition function $\delta$ to its current state and the list of messages it has just received
	to go to  a next state. 
An execution of~${\cal A }$ is thus entirely determined by the collection of the initial states and the dynamic graph~$\dG$.
In the rest of the paper, we adopt the following notation:
	given an execution of~${\cal A }$, the value of any variable $x_i$ local to the node~$i$  at the end of 
	round~$t$ is denoted by $x_i(t)$, and $x_i(0)$ is the initial value of $x_i$ in this execution. 

The self-stabilization model is less powerful than the model of  asynchronous starts~\cite{CBM:tcs:19}.
Indeed, regarding eventual convergence properties, every self-stabilizing 
	algorithm obviously tolerates asynchronous starts.
In contrast, a node cannot measure the time elapsed since it started the computation 
	in the self-stabilization model, while it can easily do it in the second model with asynchronous 
	starts.	

 \subsection{Dynamic graphs}\label{sec:dyngraph}
 \paragraph{Graph intervals.}
Let us first  recall that the {\em product} of two digraphs $G_1=(V,E_1)$ and $G_2=(V,E_2)$, denoted $G_1 \circ G_2 $, 
	is the digraph  with the set of nodes $V$ and with an edge $(i,j)$ if there exists~$k\in V$ such that $(i,k)\in E_1$ 
	and $ (k,j) \in E_2$.
For any dynamic graph $\dG$ and any integers $t' \geq t \geq 1$, we let 
	$$ \dG(t:t') \eqdef \dG(t) \circ  \dots \circ \dG(t') .$$
By convention, $\dG(t:t)=\dG(t) $, and when $0 < t' < t$,
	$ \dG(t:t')$ is the digraph with only a self-loop at each node.

Given any dynamic graph $\dG$,  the set of $i$'s in-neighbors 
	in $\dG(t:t')$ is denoted by $\In_i(t  :t' )$,
	and simply by $\In_i(t)$ when $t' = t$.
Observe that due to self-loops, all these sets  contain the node~$i$, and
	$$ \bigcup_{t \leq \, s \, \leq t'} \In_i (s)\subseteq  \In_i (t:t')  , $$
	but the inclusion may be strict.
	
Every edge $(i,j)$ in $\dG(t:t')$ corresponds to a \emph{path in the round interval}~$[t,t']$:
	there exist $t'-t+2$ nodes $ i = k_0 , k_1,\dots ,k_{t'-t +1} = j $ such that 
	$(k_r,k_{r+1})$ is an edge of $\dG(t + r)$ for each $r = 0,\dots , t'-t $.
	
\paragraph{Center, eccentricity, and radius.}
Let us first  recall that a node~$i$ is  {\em  a root of} the digraph $G  $  if for every node  $j \in V$, 
	there is  a path from $i$ to $j$ in $G$.
The set of~$G$'s roots is denoted by $\Roo(G)$.

The \emph{eccentricity of a node}~$i$ in the dynamic graph~$\dG$, denoted~$\e_{\dG}(i)$, 
	is defined as 
	$$ \e_{\dG}(i) \eqdef \inf \{ d \in \IN^+ \,  \mid \,  \forall t\in \IN^+, \forall j \in V :  (i,j) \mbox{ is an edge in } \dG(t:t+d-1) \}.$$
The node~$i$ is \emph{central in}  $\dG$ if its eccentricity is finite, and the \emph{center} of~$\dG$, 
	denoted by $\Z(\dG)$, is defined as the set of $\dG$'s central nodes.
The \emph{diameter} of~$\dG$ is then defined as the supremum of eccentricities:
	$$ \diam(\dG) \eqdef  \sup_{i\in V}~\e_{\dG}(i) $$
	and the \emph{radius} of $\dG$ is  the infimum of eccentricities:
	$$ \rad(\dG) \eqdef  \inf_{i\in V}~\e_{\dG}(i). $$
In particular, $\diam(\dG) $ is finite if and only if  $\Z(\dG) = V$,
	 while $\rad(\dG) $ is finite if and only if $\Z(\dG) \neq \emptyset$.

We also introduce the {\em  kernel} of a dynamic graph $\dG$, denoted by $K(\dG)$ and  defined as
	$$ K (\dG) \eqdef   
		\left \{ i \in V \,  \mid \, \forall t\in \IN^+, \forall j \in V,  \exists \, t'\geq t :  
			(i,j)  \mbox{ is an edge in }  \dG(t:t') \right\} .$$
Clearly, it holds that $Z(\dG) \subseteq K(\dG) $, and the inclusion is strict in general.

\paragraph{Connectivity with bounded delay.}
Let $\Delta$ be a positive integer.
The dynamic graph $\dG$ is said to be \emph{rooted with delay}~$\Delta$ if the digraph
	$\dG(t:t + \Delta-1)$ is rooted for each positive integer~$t$.
Observe that if $\dG$ is rooted  with delay~$\Delta$, then
	it is rooted with any delay~$\Delta' \geq \Delta$ since there is a permanent self-loop at each node.
It is  \emph{rooted with bounded delay} if it is rooted  with some delay~$\Delta$.	
	
		
The dynamic graph $\dG$ is \emph{uniformly rooted with delay}~$\Delta$ if the sets $\Roo (\dG(t : t + \Delta-1))$ 
	are all equal and non-empty, and this common set of roots is then denoted by $\Z_{\Delta}(\dG)$.
We easily check that  every node in $\Z_{\Delta}(\dG)$ is central in~$\dG$,
	with an eccentricity at most equal to $ \Delta (|V| -1)$.
Moreover, if $\dG$ is uniformly rooted with delay~$\Delta$, the set~$\Z_{\Delta}(\dG)$ has no incoming edge in each 
	digraph $\dG(t)$, which shows that $\dG$ is  uniformly rooted with any delay~$\Delta'\geq \Delta$.
Moreover, $\Z_{\Delta'}(\dG) = \Z_{\Delta}(\dG)$, and this non-empty set coincides with~$\dG$'s center, namely $\Z(\dG)$. 
The dynamic graph $\dG$ is \emph{uniformly rooted with bounded delay} 
	if it is uniformly rooted with some delay~$\Delta$.
In particular, $\dG$ is \emph{strongly connected with bounded delay} 
	if it is uniformly rooted with bounded delay and $\Z(\dG) =V$, 
	which is equivalent to  just $\Z(\dG) =V$.

For every  property $\Phi$ on dynamic graphs, the self-stabilization model naturally leads to
	consider the class of dynamic graphs with a suffix satisfying $\Phi$, i.e., that ultimately satisfy $\Phi$. 
Using large enough delays and because of the permanent self-loops, this extension is actually pointless 
	for all the above   mentioned ``properties with bounded delay'', except the property of uniform rootedness 
	with bounded delay.
	
\section{Unbounded  \emph{MinMax} Clocks} \label{sec:minmax}

In this section, we present a self-stabilizing algorithm that builds unbounded  synchronized clocks 
	in  any dynamic  graph that is rooted with bounded delay.
Our algorithm uses the \emph{MinMax}  algorithmic scheme developed by Charron-Bost and Moran~\cite{CBM:dc:21}
	for coping with asynchronous starts and non-strong connectivity in the context of stabilizing consensus.
With a simple incrementing procedure, the translation into an algorithm for clock synchronization is direct;
	the more delicate point  is the derivation of a self-stabilizing algorithm, which works with arbitrary initial states 
	instead of only tolerating asynchronous starts.	
\subsection{Kernel of a rooted dynamic graph}

The \emph{MinMax} scheme relies on a fundamental property of the kernel 
	in a dynamic graph~$\dG$ that is rooted with delay $\Delta$: while a node in the kernel 
	may be non-central, i.e., may have an infinite eccentricity,\footnote{%
	For instance, if $S_i$ and $S_j$ denote the two stars centered at $i$ and $j$, respectively,
	then the dynamic graph 
	$$S_i,S_j,S_i ,S_i , S_j , S_j ,  S_i , S_i , S_i , S_j , S_j , S_j ,  \cdots $$
	is rooted with delay one, but has an infinite radius.}
	the kernel  has a finite ``global eccentricity''.
More precisely, above a certain index, there exists a dynamic path  
	of length $ \Delta (|V| - |K(\dG)|) $ from the kernel (considered as a whole) to every node in the graph.

\begin{lem}\label{lem:kernel}
If $\dG$ is rooted with delay $\Delta$, then there exists a positive integer~$s_{0}$
	such that:
	$$ \forall t \geq s_0, \forall i \in V, \ \ \In_i ( t : t + \Delta (|V| - | K(\dG)|) ) \cap K(\dG) \neq \emptyset. $$
\end{lem}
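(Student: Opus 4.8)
The plan is to control the kernel through \emph{forward reach sets} and to show that, once these sets have stabilized, the root of every $\Delta$-window is forced to lie in $K(\dG)$. For a node $i$ and a round $t$, set $\mathcal{R}_i(t) \eqdef \{ j \in V : \exists\, t' \geq t,\ (i,j) \text{ is an edge in } \dG(t:t') \}$, the set of nodes reachable from $i$ by a temporal path starting at round $t$. With this notation the kernel is exactly $K(\dG) = \{ i \in V : \forall t \in \IN^+,\ \mathcal{R}_i(t) = V \}$.

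First I would establish monotonicity: $\mathcal{R}_i(t+1) \subseteq \mathcal{R}_i(t)$ for every $i$ and $t$. This is where the permanent self-loops are used — any temporal path from $i$ to $j$ in the interval $[t+1,t']$ can be prefixed by the self-loop at $i$ in round $t$ to yield a path in $[t,t']$. Since each $\mathcal{R}_i(t)$ is a subset of the finite set $V$ and the sequence $\left(\mathcal{R}_i(t)\right)_t$ is non-increasing, it is eventually constant; I take $s_0$ to be a round beyond which $\mathcal{R}_i$ is constant for every $i$ simultaneously (a finite maximum over the $|V|$ nodes). Monotonicity also yields the convenient characterization $i \in K(\dG) \iff \mathcal{R}_i(s) = V$ for all $s \iff \lim_t \mathcal{R}_i(t) = V$.

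The core step is the following: for every $t \geq s_0$, any root $r$ of $\dG(t:t + \Delta - 1)$ belongs to $K(\dG)$. Indeed, rootedness with delay $\Delta$ guarantees such an $r$ exists, and by definition $r$ reaches every node within the window $[t,t+\Delta-1]$, so $\mathcal{R}_r(t) = V$; since $t \geq s_0$ the set $\mathcal{R}_r$ has already stabilized, whence its limit is $V$ and $r \in K(\dG)$. This simultaneously shows $K(\dG) \neq \emptyset$, which is what keeps the conclusion non-vacuous. Finally, such a root $r$ lies in $\In_i(t:t+\Delta-1)$ for every node $i$, and since $\In_i(t:t+\Delta-1) \subseteq \In_i(t:t+\Delta(|V|-|K(\dG)|))$ whenever $|K(\dG)| < |V|$ — longer windows only enlarge $\In_i$, again via the self-loop at $i$ — while the case $K(\dG)=V$ is trivial from $i \in \In_i(t)$, we conclude $\In_i(t:t+\Delta(|V|-|K(\dG)|)) \cap K(\dG) \ni r$, as required.

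The only genuine obstacle is the core step, namely that roots cannot remain outside the kernel forever; the rest is bookkeeping with self-loops. I note that the argument above in fact proves the sharper statement with window length $\Delta-1$ in place of $\Delta(|V|-|K(\dG)|)$, so the stated bound follows a fortiori. I suspect the looser bound is recorded because it is what a purely incremental argument — ``the set reachable from $K(\dG)$ grows by at least one node every $\Delta$ rounds until it exhausts $V$'' — would yield, and because this slack is immaterial for the subsequent use of the lemma.
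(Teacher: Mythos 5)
Your reach-set machinery is sound as far as it goes: the monotonicity $\mathcal{R}_i(t+1) \subseteq \mathcal{R}_i(t)$, the stabilization at a common round $s_0$, and the characterization $i \in K(\dG) \iff \lim_t \mathcal{R}_i(t) = V$ are all correct. But the proof then breaks on a misreading of what a root of $\dG(t:t+\Delta-1)$ is. By the paper's definition, a root $r$ has a \emph{path} in the product digraph $\dG(t:t+\Delta-1)$ to every node, not an \emph{edge}. This hits you twice. First, in your core step: a path $r \to v_1 \to \cdots \to v_k = j$ in $\dG(t:t+\Delta-1)$ consists of $k$ edges, each of which is a temporal path lying in the \emph{same} window $[t,t+\Delta-1]$; these cannot be concatenated in time, so ``$r$ reaches every node within the window'' does not give $\mathcal{R}_r(t) = V$ ``by definition''. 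The conclusion of that step is in fact true, but only via an argument you never make: for $t \geq s_0$ the stabilized reach sets are transitive (if $w \in \mathcal{R}_v(t)$ via a temporal path ending at some round $t_1$, and $u \in \mathcal{R}_w(t) = \mathcal{R}_w(t_1+1)$, then concatenation gives $u \in \mathcal{R}_v(t)$), and induction along the path in the product digraph then yields $\mathcal{R}_r(t)=V$. That transitivity-after-stabilization argument is the real content of the step, and it is missing.

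Second, and irreparably, your final step ``such a root $r$ lies in $\In_i(t:t+\Delta-1)$ for every node $i$'' is false, and with it your claimed sharper bound of $\Delta-1$. Counterexample: take the static digraph $a \to b \to c$ (with self-loops) repeated forever. It is rooted with delay $\Delta = 1$, its kernel is $\{a\}$, and all reach sets are constant from round one, yet $a \notin \In_c(t:t) = \{b,c\}$ for every $t$; membership in $\In_i$ requires an \emph{edge} in the window, and on the $n$-node directed path the window needed for the first node to become an edge-in-neighbor of the last grows linearly with $|V| - |K(\dG)|$. So the factor $|V|-|K(\dG)|$ that you dismiss as slack from ``a purely incremental argument'' is exactly the price of converting paths from the kernel into edges, which is what the lemma asserts. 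This is precisely what the paper's expansion argument supplies (the out-neighborhood $S_d$ of $K(\dG)$ gains at least one node per window, because the root of each successive window lies in $K(\dG) \subseteq S_d$, and edges \emph{do} compose across consecutive windows); your proof has no substitute for it, and any repair of your approach would have to reinstate it.
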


\begin{proof} 
By considering the dynamic graph $\dG_{\Delta}$ defined by $\dG_{\Delta} (t) = \dG \big( (t-1)\Delta+ 1 : t \Delta \big) $
	that is  rooted with delay one, the proof reduces to the case $\Delta = 1$.
Let $s_0$ be a positive integer such that for all $t \geq s_0$, every edge of~$\dG(t)$  
	occurs infinitely often.
Then we have that
	\begin{equation}\label{eq:blush}
	 \forall t\geq s_0, \  \  \Roo(\dG( t ))  \subseteq K(\dG )   .
	 \end{equation}
Since $\dG$ is rooted with delay one, this implies that $K(\dG ) $ is non-empty.
	
For any non-negative integer~$d$, let us now introduce the set $S_d$ of nodes that are out-neighbors 
	of the nodes in~$K (\dG)$ in the digraph~$\dG(t : t+d-1)$.	
Hence,  $ S_{0} = K (\dG) $ and because of the self-loops, 
	$S_{d} \subseteq S_{d + 1}$.
We now show that either $ S_{d } = V $ or $S_{d} \subsetneq S_{d + 1}$.

For that, assume that there is a node $i \notin S_d$, and let $j$ be a root of the digraph $\dG( t+d +1)$;	
	hence there exists a path $\gamma$ from~$j$ to $i$ in  the digraph~$\dG( t+d +1)$.
From (\ref{eq:blush}) and the above inclusions, we derive that 
	$$ j \in \Roo(\dG( t + d + 1 ))  \subseteq  K (\dG ) \subseteq S_d  .$$
Thereby, there are two consecutive nodes~$k$ and~$\ell$ along the path $\gamma$ such that
	$k \in S_d$ and $\ell \notin S_d$.
By construction, $\ell \in S_{d + 1}$, which shows that $S_{d} \neq S_{d + 1}$.

It follows that $ S_{|V| -|K(\dG)|} = V  $, and thus 
	$\In_i \big( t : t + (|V| -|K(\dG)|) \big) \cap K( \dG) \neq \emptyset $,
		as required.
\end{proof}

Let us observe that the index $s_0$ in the above lemma depends on the preliminary period in which 
	transient edges occur which may be of arbitrary length.

\subsection{The self-stabilizing MinMax clocks}

A classical tool in the study of synchronized networks is the concept of \emph{view},
	introduced for a fixed digraph~\cite{YK:ieee:96,BSVCGS:istcs:96}, and which can be easily 
	extended to dynamic graphs: 
	a view of a node is a tree that gathers all the information that the node can obtain by exchanging 
	information with its time-varying neighbors.
	
In the first lemma of this section, we demonstrate that, regarding  specific types of variables, any view may be reduced 
	to a flatten-tree structure, namely a  \emph{set} of values associated to some nodes in the tree and 
	tagged with time labels, indicating their respective depths in the tree.
More formally, a \emph{view}  is a non-empty and finite subset $\V$ of  $ \IN \times \IN$.
If $(v, d) \in \V$, we say that $v$ \emph{is a value of the view $\V$ of depth} $d$.

Our  algorithm equips each node $i$ with an integer clock~$C_i$ (see Algorithm~\ref{algo:MinMax}).
The node~$i$ has also an auxiliary clock~$h_i$ that measures the elapsed 
	time (line~\ref{line:incr}). 
For the ease of description, we introduce a third integer clock, namely the \emph{min-clock} $c_i$, 
	even if this variable is actually redundant: in each round, 
	the clock~$c_i$	 is set to  one plus the minimum of the clock values $c_j$  that $i$ has just received.\footnote{%
	The simple min-clocks $c_i$'s  clearly  achieve eventual synchronization in the case of 
	strong connectivity.} 
The node $i$ maintains a view $\V_i$ for recording the values of the clocks~$c_j$
	it has heard of.
Any value $v$ in this view is equal to the past value  of some min-clock $c_j$ 
	augmented by the time lag correction, and $v$'s depth in this view is 
	the latest round where this equality held to the best of $i$'s knowledge.
For that, at each round, $i$ increments depths by one and appends the pair $(v,0)$ to its view
	if $v$ is $c_i$'s current value.
Then the node $i$ sets its clock $C_i$ to the maximum of the values in its current view
	  of ``good enough'' depth (line~\ref{line:C}).

\begin{algorithm}[ht]
\begin{algorithmic}[1]
\vspace{0.1cm}
\REQUIRE{}
	\STATE $h_i \in \mathds{N}$
	\STATE $\V_i \in  {\cal V} $
	\STATE $C_i \in \mathds{N}$
\ENSURE{}
	\STATE send $\V_i$ to all
	\STATE receive $\V_{j_1},\V_{j_2}, \dots $ from the  in-neighbors and let $U$ denote the union of all these views
	\STATE $\V_i  \gets \left \{ \left (1 + v , 1 + d  \right ) \, \mid \, (v, d) \in U \right \} $ \label{line:minAGE}
	\STATE $\V_i \gets \V_i \cup \{(\min \V_i [1],0) \}$ \label{line:reset}
	\STATE $h_i \leftarrow  1 + h_i $ \label{line:incr}
	\STATE $C_i \leftarrow \max \left \{ v \, \mid \, \exists \, d \leq h_i / 2 : (v, d) \in \V_i \right \}$ \label{line:C}
\end{algorithmic}
\caption{The $\MM$ algorithm for unbounded clocks}
\label{algo:MinMax}
\end{algorithm}

For the correctness proof, we fix an execution of the $\MM$ algorithm with the dynamic graph~$\dG$.
From now onwards,  we let  $c_i(t) = \min \V_i(t)[1]$.
We start with the following key lemma:

\begin{lem} \label{lem:age}
Let $i \in V$ and $v \in \IN$.
For all rounds $t\geq 1$ and all integers $d \in \{0, \dots, t -1 \}$,
	the following two statements are equivalent:
	\begin{enumerate}
		\item $(v, d) \in \V_i(t)$; 
	\item the node $i$ has an in-neighbor $j$ in $\dG(t-d+1:t)$ such that $ c_j(t-d) = v - d$.
	\end{enumerate}
\end{lem}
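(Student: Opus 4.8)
The plan is to prove the two implications together by induction on the depth $d$, with the statement for each fixed $d$ understood as quantified over all nodes $i$ and all rounds $t\geq d+1$. Before starting I would record two structural facts about how pairs enter $\V_i(t)$, both read directly off Algorithm~\ref{algo:MinMax}. First, since line~\ref{line:minAGE} increments every depth by one, a pair of depth $0$ can only be the one appended at line~\ref{line:reset}; hence $(v,0)\in\V_i(t)$ if and only if $v=\min\V_i(t)[1]=c_i(t)$. Second, a pair $(v,d)\in\V_i(t)$ with $d\geq 1$ can only have been produced at line~\ref{line:minAGE}, so $(v,d)\in\V_i(t)$ if and only if $(v-1,d-1)$ lies in the received union $U$, which in turn holds if and only if some in-neighbor $j'$ of $i$ in $\dG(t)$ satisfies $(v-1,d-1)\in\V_{j'}(t-1)$.

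For the base case $d=0$, statement~2 refers to $\dG(t+1:t)$, which by convention carries only self-loops, so the only in-neighbor of $i$ there is $i$ itself and statement~2 becomes $c_i(t)=v$; this matches the first structural fact exactly. Note that this case never touches the arbitrary initial view, since a depth-$0$ pair is always freshly created at line~\ref{line:reset}.

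For the inductive step, fix $d\geq 1$ and assume the equivalence for depth $d-1$ at all rounds. The key is the composition identity $\dG(t-d+1:t)=\dG(t-d+1:t-1)\circ\dG(t)$, under which an edge $(j,i)$ of the left-hand graph is exactly a choice of intermediate node $j'$ with $(j,j')\in\dG(t-d+1:t-1)$ and $(j',i)\in\dG(t)$. Running this forward, from $(v,d)\in\V_i(t)$ the second structural fact gives an in-neighbor $j'$ in $\dG(t)$ with $(v-1,d-1)\in\V_{j'}(t-1)$; the induction hypothesis at round $t-1$ and depth $d-1$ — where $(t-1)-(d-1)=t-d$ and $(v-1)-(d-1)=v-d$ — yields an in-neighbor $j$ of $j'$ in $\dG(t-d+1:t-1)$ with $c_j(t-d)=v-d$, and composing the two edges produces the edge of $\dG(t-d+1:t)$ required by statement~2. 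The converse reads the same argument backwards: split the given edge of $\dG(t-d+1:t)$ at an intermediate $j'$, apply the induction hypothesis to obtain $(v-1,d-1)\in\V_{j'}(t-1)$, and since $j'$ is an in-neighbor of $i$ in $\dG(t)$ its view is part of $U$, so the second structural fact places $(v,d)$ in $\V_i(t)$.

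The two structural facts and the index bookkeeping in the composition are routine; the point that deserves genuine care is the base case, where I must be sure that the depth-$0$ pairs of $\V_i(t)$ are exactly $(c_i(t),0)$ and that this lines up with the self-loop convention for $\dG(t+1:t)$. The real conceptual content — and the reason the hypothesis restricts to $d\leq t-1$ — is that this bound keeps every referenced round $t-d$ at or above $1$, so that $c_j(t-d)$ always refers to an actual round of the execution and no arbitrary initial-state value can creep into the equivalence.
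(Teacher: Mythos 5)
Your proof is correct and takes essentially the same route as the paper's: the same two structural facts about how pairs enter $\V_i(t)$ (depth-$0$ pairs come only from line~\ref{line:reset}, deeper pairs only from line~\ref{line:minAGE} via an in-neighbor's view) and the same decomposition $\dG(t-d+1:t)=\dG(t-d+1:t-1)\circ\dG(t)$ drive a recursion from $(t,d)$ to $(t-1,d-1)$. The only difference is organizational: you run one induction on the depth $d$ covering both implications, whereas the paper proves $(2)\Rightarrow(1)$ directly by propagating $(v-d,0)\in\V_j(t-d)$ forward along a path and proves $(1)\Rightarrow(2)$ by induction on $t$; the two bookkeeping schemes are interchangeable.
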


\begin{proof}
	For the implication $(2) \Rightarrow (1)$, assume that there exists a node  $j \in \In_i(t-d+1:t)$ such that $c_j(t-d) = v- d$.
	Then, by line \ref{line:reset}, $(v-d, 0) \in \V_j (t-d)$.
Moreover, there exists a  path in the interval $[t-d+1,t]$ that we denote $j = k_0, k_1, \dots, k_d = i$.
Because of the update rule for depths, we obtain:

	$$(v-d+1, 1) \in \V_{k_1}(t-d+1), (v-d+2, 2) \in \V_{k_2}(t-d+2),  \cdots, (v, d) \in \V_{k_d}(t),$$ 
	and the implication $(2) \Rightarrow (1)$ follows.

	The direct implication is proved by an induction on $t > 0$.
\begin{enumerate}
\item \emph{Basis:} $t=1$, and thus $d = 0$.
Assuming $(v, 0) \in \V_i(1)$, we obtain $c_i(1) = v$,
	and the node $i$ is in  $\In_i(2:1) = \{ i \}$ (cf. Section~\ref{sec:dyngraph}).
\item  \emph{Inductive step:} 
Assume that the implication $(1) \Rightarrow (2)$  holds at round $t-1$ for all integers $d \in \{0, \cdots, t-2 \}$.

If $(v, d) \in \V_i(t)$, then either $d = 0$ or $d > 0$.
In the first case, we obtain $c_i(t) = v$ and $i \in \In_i(t+1:t)$, similarly to the base case.
In the case $d > 0$,  there exists some node $k \in \In_i(t)$ whose view at round $t-1$ contains $(v-1, d-1)$.
The inductive hypothesis implies that there exists a node $j \in \In_k(t-d+1:t-1)$ such that 	
	$$c_j(t-d) = v - d.$$
Since $\mathds{G}(t-d+1:t) = \mathds{G}(t-d+1:t-1) \circ \mathds{G}(t)$, it follows that $j \in \In_i(t-d+1:t)$ as required.
\end{enumerate}	
\end{proof}

Note that Lemma~\ref{lem:age} does not hold for $d =t$.
Indeed, at round~$t$, a node has not yet had enough time to eliminate the ``garbage''
	in its view at depth $t$, induced by the arbitrariness of its initial state.  

\begin{lem} \label{lem:uniker}
There is a round $t_0$ such that for all rounds $t\geq t_0$, the following holds:
	$$  \forall i\in V, \, \forall j \in K(\dG), \  c_i(t ) \leq c_j(t) .$$
\end{lem}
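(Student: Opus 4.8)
The plan is to analyze the auxiliary \emph{min-clocks} $c_i$, which by lines~\ref{line:minAGE}--\ref{line:reset} obey the pure Min recursion $c_i(t)=1+\min_{j\in\In_i(t)}c_j(t-1)$ (the self-loops guarantee $i\in\In_i(t)$, so the extra pair added in line~\ref{line:reset} never lowers the minimum). First I would record the exact shape of the target: since the claim asks $c_i(t)\le c_j(t)$ for \emph{every} $i\in V$ and \emph{every} $j\in K(\dG)$, it in fact forces all kernel nodes to share one common value at round $t$, and this value to be the global maximum. Accordingly I would prove the statement in three stages: (i) the kernel is eventually a closed subsystem, receiving only from itself; (ii) inside this subsystem the Min recursion synchronizes all kernel clocks to a common value $K^{*}(t)$ that increases by exactly one per round; (iii) every node is bounded above by $K^{*}(t)$ via the bounded-delay reachability from the kernel supplied by \Cref{lem:kernel}.

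The crux is stage (i). Let $s_0$ be the round from the proof of \Cref{lem:kernel} beyond which every edge of every $\dG(t)$ occurs infinitely often. I claim that for $t\ge s_0$ the set $K(\dG)$ has no incoming edge from $V\setminus K(\dG)$ in $\dG(t)$. Indeed, if $(k,j)\in E_t$ with $j\in K(\dG)$ and $t\ge s_0$, then this edge recurs infinitely often, and composing it (padding with self-loops at the front) with the infinitely-often reachability of $j$ guaranteed by $j\in K(\dG)$ shows that $k$ too reaches every node from every starting round, i.e. $k\in K(\dG)$, a contradiction. Consequently, for $t>s_0$ the clocks of kernel nodes depend only on kernel nodes; and reading any temporal path backwards from its kernel endpoint, every temporal path between two kernel nodes at rounds $\ge s_0$ stays inside $K(\dG)$.

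For stage (ii) I would run the standard Min-algorithm argument, now legitimately confined to the kernel. The self-loops give $\min_{j\in K(\dG)}c_j(t)=1+\min_{j\in K(\dG)}c_j(t-1)$, so the kernel-minimum advances by exactly one each round; moreover, by (i), a kernel node once equal to this minimum stays equal to it, and the minimum propagates forward along every kernel edge. By definition of $K(\dG)$ each kernel node is reached from any fixed kernel node infinitely often, and by (i) such temporal paths remain inside $K(\dG)$; since $K(\dG)$ is finite, there is a round beyond which all kernel clocks coincide with the kernel-minimum and thereafter increment in lockstep. Call this common value $K^{*}(t)$.

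Finally, for stage (iii), fix $i\in V$ and a large round $t$. Applying \Cref{lem:kernel}, shifted so that the window ends at $t$, yields a kernel node $k$ and a round $t'$ with $t-t'\le \Delta(|V|-|K(\dG)|)+1$ together with a temporal path $k\to i$ over $[t'+1,t]$; telescoping the Min recursion along this path gives $c_i(t)\le c_k(t')+(t-t')$. Once $t'$ is past the synchronization round of stage (ii), $c_k(t')=K^{*}(t')=K^{*}(t)-(t-t')$, whence $c_i(t)\le K^{*}(t)=c_j(t)$ for all $j\in K(\dG)$; choosing $t_0$ larger than $s_0$, than the kernel-synchronization round, and than the delay $\Delta(|V|-|K(\dG)|)$ finishes the proof. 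I expect the main obstacle to be stage~(i): the fact that the kernel only ever receives from itself is exactly what turns it into an autonomous, internally reachable subsystem, and without it neither the internal synchronization of (ii) nor the uniform upper bound of (iii) would hold. Pinning it down cleanly requires the infinitely-often recurrence of post-$s_0$ edges combined with the compositional reachability characterizing $K(\dG)$.
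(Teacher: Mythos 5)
You give a correct proof, but by a genuinely different route than the paper. The paper's proof is a potential-function argument: self-loops give $c_i(t+1) \leq 1 + c_i(t)$, so each sequence $c_i(t) - t$ is non-increasing, and a short induction shows it is bounded below by zero; hence $c_i(t)-t$ stabilizes in finite time at a constant $c_i^0$, and the comparison $c_i^0 \leq c_j^0$ is then obtained by feeding Lemma~\ref{lem:kernel} into the view characterization of Lemma~\ref{lem:age}, applied after all these sequences have stabilized. You never use this monotone quantity. Instead you prove structural facts about the dynamics that the paper does not state: after the round $s_0$ beyond which every surviving edge recurs infinitely often, the kernel has no incoming edge from $V \setminus K(\dG)$ (the paper's proof of Lemma~\ref{lem:kernel} only records the different fact $\Roo(\dG(t)) \subseteq K(\dG)$), so the kernel is an autonomous finite Min-system; its clocks then lock onto the unit-incrementing kernel minimum; and the bound is pushed out to all of $V$ by Lemma~\ref{lem:kernel} plus telescoping of the Min recursion. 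Your route is longer but buys two things: it establishes explicitly that all kernel clocks become equal and maximal --- exactly the ``$\forall j \in K(\dG)$'' strength of the statement, which the paper's proof treats only implicitly (it exhibits, for each $i$, \emph{one} kernel node $j$ with $c_i^0 \leq c_j^0$) --- and it bypasses Lemma~\ref{lem:age} entirely, working directly with the recursion $c_i(t) = 1 + \min_{j \in \In_i(t)} c_j(t-1)$; the paper's route is shorter and reuses Lemma~\ref{lem:age}, which is needed anyway for Theorem~\ref{thm:minmax}. One small slip on your side: your final $t_0$ must dominate the \emph{sums} $s_0 + \Delta(|V|-|K(\dG)|) + 1$ and $T + \Delta(|V|-|K(\dG)|) + 1$, where $T$ is your kernel-synchronization round (so that the window of Lemma~\ref{lem:kernel} ending at $t$ both starts after $s_0$ and has its source round after $T$), not merely each of $s_0$, $T$ and the delay separately; this is trivial to repair.
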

\begin{proof}
Because of the self-loop at node $i$ in the digraph $\dG(t+1)$, if 
	$ v $ is a value in the view $\V_i (t) $, then  $v + 1$ is a value in  $\V_i (t + 1) $, and hence
	$$ c_i(t + 1) \leq 1 + c_i(t) .$$
It follows that the sequence $ \left(c_i(t) - t \right)_{t \in \mathds{N}}$ is non-increasing.
A simple induction shows that the sequence  is lower-bounded by zero,
	and hence converges in finite time.
Thus from some round  $r_i$ onwards, the quantity $ c_i(t) - t $ is equal to a constant integer~$c_i^0$;
	we let $r_{0} = \max_{i\in V} r_i$ and $t_0 = r_0 +1+  (|V|-|K(\dG)|)\Delta$.	
	
Let $i$ be an arbitrary node.
By Lemma~\ref{lem:kernel}, there exists  $j \in In_i(r_{0}+1:t _0) \cap K(\dG)$.
Lemma~\ref{lem:age} shows that $t_0 - r_{0} + c_j(r_{0}) $ is a value in $\V_i(t_0)$, and thus 
	$$ c_i( t _0) \leq t_0 - r_{0} + c_j(r_{0}) .$$
Since $t_0 \geq r_{0} + 1 $, we have $ c_i(t_0)  = c_i^0 + t_0 $ and $ c_j(r_{0})  = c_j^0 + r_{0} $,
	which with the above inequality gives  $  c_i^0 \leq  c_j^0 $.
It follows that for all rounds $t\geq t_0 $, it holds that 
	$$c_i(t) \leq c_j(t) ,$$
	and  thus  $c_i(t) =  c_j(t)$ if  both $i$ and $j$ are  in $K(\dG)$.	
\end{proof}

\begin{thm} \label{thm:minmax}
The \emph{MinMax} clocks  synchronize in any dynamic graph that is 
	 rooted with bounded delay.
\end{thm}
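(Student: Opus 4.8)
The plan is to pin the value of every clock $C_i(t)$ to the single value carried by the kernel. By Lemma~\ref{lem:uniker} there is a round $t_0$ from which all nodes of $K(\dG)$ share one min-clock value; I write $\gamma(t) \eqdef c_j(t)$ for any $j\in K(\dG)$ and $t \geq t_0$, so that $c_i(t) \leq \gamma(t)$ for every node $i$. From the proof of Lemma~\ref{lem:uniker} the kernel min-clocks are eventually of the common form $c_j(t) = c^0 + t$, hence $\gamma(t) = c^0 + t$ and $\gamma(t) - \gamma(t-d) = d$ whenever $t-d \geq t_0$. I will show that there is a round $T$ with $C_i(t) = \gamma(t)$ for all $t \geq T$ and all $i$, which yields synchronization since $\gamma$ is common to all nodes and advances by one per round. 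Throughout I abbreviate $D \eqdef \Delta(|V| - |K(\dG)|)$ and use that line~\ref{line:incr} forces $h_i(t) = h_i(0) + t$.

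For the lower bound $C_i(t) \geq \gamma(t)$, I would feed the kernel into node $i$ through a short temporal path. Fix a large $t$ and the depth $d^\star = D+1$. By Lemma~\ref{lem:kernel} applied at round $t - D \geq s_0$, there is a kernel node $j \in \In_i(t-D : t) = \In_i(t - d^\star + 1 : t)$. Provided $t - d^\star \geq t_0$, we have $c_j(t-d^\star) = \gamma(t-d^\star) = c^0 + (t-d^\star)$, so the implication $(2)\Rightarrow(1)$ of Lemma~\ref{lem:age} places $(v, d^\star)$ in $\V_i(t)$ with $v = c_j(t-d^\star) + d^\star = c^0 + t = \gamma(t)$. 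Since $h_i(t)/2 = (h_i(0)+t)/2$ exceeds the fixed threshold $d^\star = D+1$ once $t$ is large, this pair is eligible in line~\ref{line:C}, whence $C_i(t) \geq \gamma(t)$.

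For the matching upper bound $C_i(t) \leq \gamma(t)$, I would bound every eligible value. Any pair $(v,d)\in\V_i(t)$ counted in line~\ref{line:C} has $d \leq h_i(t)/2 = (h_i(0)+t)/2$, which is $< t$ for $t > h_i(0)$; hence Lemma~\ref{lem:age} applies and $v = c_k(t-d) + d$ for some in-neighbour $k$. If moreover $t-d \geq t_0$, then $c_k(t-d) \leq \gamma(t-d) = c^0 + (t-d)$, so $v \leq c^0 + t = \gamma(t)$. The requirement $t-d \geq t_0$ for every eligible depth amounts to $t - h_i(t)/2 = (t - h_i(0))/2 \geq t_0$, i.e. $t \geq 2t_0 + h_i(0)$; for such $t$ the maximum in line~\ref{line:C} is at most $\gamma(t)$.

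Combining the two bounds, $C_i(t) = \gamma(t)$ for every $i$ once $t$ exceeds the finitely many thresholds above; taking their maximum over the finite set $V$ (they depend on $i$ only through $h_i(0)$) produces the uniform round $T$. I expect the delicate point to be the calibration of the depth cutoff $h_i/2$ in line~\ref{line:C}: it must be large enough to reach the kernel (depths up to the constant $D+1$) yet small enough both to discard the initial ``garbage'', which lives at depths $\geq t$ where Lemma~\ref{lem:age} fails, and to stay within the window $t-d \geq t_0$ where the bound $c_k \leq \gamma$ is available. The factor one-half makes $h_i(t)/2$ grow like $t/2$, strictly between the constant $D+1$ and $t$, so it lands in the good window for all large $t$; the remaining care is only that $\Delta$, $s_0$, $t_0$, $c^0$ and the arbitrary offset $h_i(0)$, though unknown to the algorithm, are fixed finite constants of the execution, so their mere existence suffices.
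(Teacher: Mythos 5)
Your proposal is correct and follows essentially the same route as the paper's own proof: both use Lemma~\ref{lem:kernel} to plant the kernel value $(c^0+t,\,R+1)$ in each view via Lemma~\ref{lem:age}, check its eligibility against the threshold $h_i(t)/2$, and then bound every eligible value above by $c^0+t$ using Lemma~\ref{lem:age} in the reverse direction together with Lemma~\ref{lem:uniker}, finally taking the maximum of the same finitely many thresholds (your cutoffs coincide with the paper's $t_1$). The only cosmetic difference is that in the upper bound you invoke the statement of Lemma~\ref{lem:uniker} directly, where the paper routes through the constants $c_j^0$ from that lemma's proof.
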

\begin{proof}
We let  $c^0 = c^0_j$ where  $j $ is any node in $K(\dG)$, and
	\begin{equation}
		t_1 \eqdef \max \{  s_0 + R , t_0 +  R + 1 , 2 (R + 1) , 2 t_0 +h(0)\}
	\end{equation}
	where $R=  \Delta (|V| -|K(\dG)| )$ and $h(0)= \max_{i\in V} h_i(0)$. 
We are going to prove that for every node~$i$ and every round $t \geq t_1$,  it holds  that
	$$ \max \left \{ v \, \mid \, \exists \, d \leq h_i (t) / 2 :  (v, d) \in \V_i (t) \right \} = c^0 +t  ,$$
	which demonstrates that all the clocks $C_i$ are synchronized from  round $t_1$.
 
Since  $ t  - R \geq  s_0 $, Lemma~\ref{lem:kernel} proves  there exists a
	node~$j $ in $ K(\dG) \cap  \In_i( t-R : t)$.
The inequality 	 $ t -R - 1  \geq t_0$ gives
		\begin{equation*}
		c_j(t-R-1) = c^0 + t - R -1 ,
	\end{equation*}
		and Lemma~\ref{lem:age} shows that the tuple $(c^0 + t, R + 1)$ is in $i$'s view at round~$t$.
Moreover, we easily check that
		\begin{equation*}
		 R + 1 \leq \frac{t_1}{2} \leq  \frac{t}{2} \leq \frac{t+h_i(0)}{2} = \frac{h_i(t)}{2} 
	\end{equation*}
		which yields
		\begin{equation*}
		c^0 + t \leq \max \left \{ v \, \mid \, \exists d \leq h_i / 2, (v, d) \in \V_i \right \}
	\end{equation*}
		
Now, let $\overline{v}$ be the maximum value in $i$'s view at round $t$ whose
	depth $d$ is less or equal to~$\frac{h_i(t)}{2}$.
Then we have
		\begin{equation*}
		 t -d \geq t - \frac{h_i(t)}{2}  = \frac{t-h_i(0)}{2} \geq t_0 
	\end{equation*}
	Since  $t -1 \geq d$, Lemma~\ref{lem:age} applies and there exists a node~$j$ in  $\In_i(t-d+1:t)$ such that
		\begin{equation*}
		c_j(t-d) = \overline{v} - d 
	\end{equation*}
	Then we obtain
		\begin{equation*}
		  \overline{v}   =   c_j^0 + t  \leq c^0 + t ,
	\end{equation*}
		which completes the proof.
The equality is because $t - d \geq t_0$,
	and the inequality is by Lemma~\ref{lem:uniker}.
\end{proof}

\subsection{Clock synchronization and link failures}
When computations are organized into synchronous rounds, benign communication failures 
	are quantified by the number of message losses per round. 
Following the approach developed in the Heard-Of model~\cite{CBS:dc:09} for benign failures,
	message losses in a static network are handled by a fault-free dynamic network with a fixed set of agents
	and time-varying links. 
In~\cite{CBFN:icalp:15}, Charron- Bost et al. showed that any digraph with $n$ nodes and at least $n^2-3n +3$ edges is rooted. 
Taking into account the $n$ self-loops and since $n^2 -3n+3 = (n^2-n)-(2n-3)$, we derive the following solvability result from 
	Theorem~\ref{thm:minmax}.
\begin{cor}
Self-stabilizing clock synchronization  may be achieved in a fully connected network with $n$ agents 
	if there are at most $2n-3$  message losses per round.
\end{cor}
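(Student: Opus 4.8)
The plan is to interpret a fully connected network subject to message losses as a dynamic graph $\dG$, verify that the loss bound forces every round's digraph to be rooted, and then quote Theorem~\ref{thm:minmax}.

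First I would set up the model. In a fully connected network of $n$ agents, the failure-free communication digraph in any round is the complete digraph, comprising the $n$ self-loops together with the $n^2 - n = n(n-1)$ directed edges between distinct agents. A message loss in round $t$ amounts to deleting one such edge from $\dG(t)$; by the model assumption that each agent communicates with itself instantaneously, self-loops are always present and are never lost, so only the $n^2 - n$ edges between distinct agents are exposed to failures.

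Next comes the edge count, which is the crux. If at most $2n - 3$ messages are lost in round $t$, then the number of edges between distinct agents surviving in $\dG(t)$ is at least
$$ (n^2 - n) - (2n - 3) = n^2 - 3n + 3 . $$
Invoking the result of Charron-Bost et al.~\cite{CBFN:icalp:15}, a digraph on $n$ nodes with at least $n^2 - 3n + 3$ edges is rooted; since $\dG(t)$ has at least this many edges (even before counting its $n$ self-loops), $\dG(t)$ is rooted. As this holds for every round $t$, the dynamic graph $\dG$ is rooted with delay one, hence rooted with bounded delay.

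Finally, Theorem~\ref{thm:minmax} applies verbatim and yields synchronization of the \emph{MinMax} clocks, which is the assertion of the corollary. I expect the only delicate point to be the bookkeeping with self-loops: one must confirm that failures cannot destroy self-loops, so that the full budget of $2n - 3$ deletions is charged against the $n^2 - n$ non-self-loop edges, and that the resulting count $n^2 - 3n + 3$ exactly meets the rootedness threshold of~\cite{CBFN:icalp:15} --- precisely the identity $n^2 - 3n + 3 = (n^2 - n) - (2n - 3)$ noted before the statement. Everything else is immediate from the cited theorem.
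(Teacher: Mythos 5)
Your proposal is correct and follows exactly the paper's own route: model the lossy fully connected network as a dynamic graph with permanent self-loops, use the identity $n^2-3n+3 = (n^2-n)-(2n-3)$ to show each round's digraph meets the rootedness threshold of~\cite{CBFN:icalp:15}, and conclude via Theorem~\ref{thm:minmax}. The paper compresses this into the paragraph preceding the corollary, so your write-up is just a more explicit version of the same argument.
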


\section{Periodic Clocks with a Finite Diameter}

Synchronized non-wrapping clocks require infinite states, and 
	if we require state finiteness, then every clock must be eventually periodic.
In other words, the relevant clock specification in the finite state framework is:
	$$\exists t_0, \forall t\geq t_0, \forall i ,j \in V,  \ C_i(t) \equiv_P C_j(t) $$
	for some fixed positive integer period $P$.
Even in the case of a a static strongly connected network, the naive algorithm consisting in the
	update rule:
	$$ C_i \leftarrow \Clmod{\underset{j \in \In_i}{\min}~C_j + 1}_{P} ,$$
	where $\clmod{c}{P}$ denotes the remainder of the Euclidean division of $c$  by~$P$,
	does not work  when the network diameter is too large compared to the period~$P$.
To overcome this problem, we present an algorithm, called \emph{SAP} (for self-adaptive period),
	largely inspired by the ideas developed by Boldi and Vigna~\cite{BV:dc:02} for their finite-state universal self-stabilizing
	algorithm in a static strongly connected network.
The key point of the \emph{SAP} algorithm  lies in the fact that for any positive integer~$M$, we have
	$$\left [\,  [\, c \, ]_{_{PM}} \, \right]_{_{P}} = [\, c \, ]_{_P} .$$
More precisely, each node~$i$ uses an integer variable $M_i$ and computes the clock value $C_i$
	not modulo~$P$, but rather modulo the time-varying period~$PM_i$.
The variable $M_i$ is used as a guess to find a large enough multiple of~$P$ so to make 
	the clocks eventually stabilized.
Until synchronization, the variables $M_i$ increase so that there is 
	``enough space'' between the largest clock value and the shortest period $PM_i$ in the network.
The algorithm is  parametrized by a non-decreasing function\footnote{%
	For the sake of simplicity, all the nodes use the same function, but the variant of the algorithm
	with a function~$g_i$ for each node~$i$ may be equally considered.}
	$g : \IN \rightarrow \IN$, and the corresponding algorithm will be denoted $\SAP$.

Let $g : \IN \rightarrow \IN$ be a non-decreasing  function.
If $q$ is a positive integer, $g^q$ denotes the $q$-th iterate of $g$,
	and $g^0$ is the identity function.
For every non-negative integer $m$, we let 
	$$ g^*(m) \eqdef \inf \{ q \in \IN \,  \mid \, g^q(0) \geq m \}.$$
The choice of $g$ may follow one of the two typical strategies below.
\begin{enumerate}
\item The function $g$ is constant and equal to $M$, which is equivalent to $g^*(m) =1$ if $0<m\leq M$,
	and $g^*(m) = \infty$ otherwise.
\item The function  $g$ is strictly inflationary, i.e., $x < g(x)$ for every non-negative integer $x$,
	in which case  $g^*$ takes only finite values.
\end{enumerate}

The pseudo-code of the $\SAP$ algorithm is given below.
For the sake of simplicity, we have omitted the first rule so to compute $\clmod{C_i}{PM_i}$
	instead of just $C_i$: this rule which allows us to assume that $C_i(0) \leq PM_i(0) -1$
	is actually effective only in the very first round.

\begin{algorithm}[h]
\small
\begin{algorithmic}[1]
\REQUIRE{}
 \STATE $C_i \in  \IN$;
  \STATE $M_i \in  \IN$; 
\ENSURE{}
  \STATE send $\langle C_i, M_i \rangle$ to all \;
  \STATE receive $\langle C_{j_1}, M_{j_1} \rangle, \langle C_{j_2}, M_{j_2} \rangle, \dots$ from the set $\!\In_i$ of incoming neighbours \;
  \STATE $C_i \leftarrow \big [\underset{j \in \In_i}{\min}~C_j + 1 \big ] _{PM_i}$ \label{line:min} \;
  \STATE $M_i \leftarrow \underset{j \in \In_i}{\max}~M_j$ \label{line:max} \;
	\IF{$C_j\not\equiv_P C_{j'} $ for some $j, j' \in \!\In$}
	\STATE	$M_i \leftarrow g(M_i)$ \label{line:app_g}
  \ENDIF
 \end{algorithmic}
 \caption{The  $\SAP$ algorithm}
\label{algo:2:SAP}
\end{algorithm}

\subsection{Basic invariants }

We fix an execution of the algorithm $\SAP$ with the dynamic graph $\dG$.
For each round $t$ in this execution, let $\nmin{i}{t}$ denote any one of the $i$'s in-neighbor  in $\dG(t)$ 
	satisfying
	$$C_{\nmin{i}{t}} (t-1) = \min_{j\in \In_i(t) } C_j(t-1) .$$

The path $ i_0, i_1, \cdots ,  i_{\ell} $ in the round interval $ [ s , s + \ell -1 ] $  is said to be a \emph{synchronized path} 
	if for each index~$k \in \{ 0, \cdots, \ell -1 \}$, the pair $( i_k , i_{k+1}) $ is an edge in $\dG(s +k)$ with 
	$$ C_{i_{k+1}} (s + k ) \equiv_{P}  1 + C_{i_k} (s + k -1) .$$
Observe that the edge $(i_0,i_{\ell})$  in~$\dG( s : s + \ell -1 )$ may also correspond to another path 
	in $ [ s , s + \ell -1 ] $ that is non-synchronized.
The system is said to be \emph{synchronized in round} $t$ if 
	$$ \forall i,j \in V, \ \ C_i (t) \equiv_P C_j(t) .$$
	
We start with two preliminary lemmas.
The first one is a direct consequence of the code of $\SAP$, and its proof is omitted.
	
\begin{lem}\label{lem:general}
\begin{enumerate}
\item If the system is synchronized in round~$s$, then it is synchronized in any round~$t\geq~s$.
\item If $(i,j)$ is an edge in $\dG(s : t)$, then $C_j(t) \leq C_i(s - 1) + t -s +1 $.
\item Each variable $M_i$ is non-decreasing. 
\end{enumerate}
\end{lem}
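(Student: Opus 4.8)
The plan is to treat the three items independently, since each follows by unwinding a single round of $\SAP$ and then inducting on the round index. The common technical ingredient is the behaviour of the reduction $\clmod{\cdot}{PM_i}$ performed at line~\ref{line:min}: on one hand it preserves residues modulo~$P$ (because $P$ divides $PM_i$, which is exactly the identity $\clmod{\clmod{c}{PM}}{P} = \clmod{c}{P}$ recalled just before the pseudo-code), and on the other hand it never increases a non-negative argument, i.e. $\clmod{c}{PM_i} \leq c$ whenever $c \geq 0$. The first property drives item~1, the second drives item~2, and item~3 uses neither.

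For item~1, I would show by induction on rounds that synchronization in round~$t-1$ forces synchronization in round~$t$. Suppose every $C_k(t-1)$ is congruent to a common residue~$r$ modulo~$P$. Then for each node~$i$ the integer $\min_{j \in \In_i(t)} C_j(t-1)$, being one of these values, is itself $\equiv_P r$, hence $\min_{j \in \In_i(t)} C_j(t-1) + 1 \equiv_P r+1$. Applying line~\ref{line:min} and the residue-preserving property gives $C_i(t) \equiv_P r+1$ for every~$i$, which is precisely synchronization in round~$t$. One may note that once the system is synchronized every node receives clocks that are pairwise congruent modulo~$P$, so the guard at the conditional never fires; but this is not even needed, since the argument controls $C_i(t)$ modulo~$P$ independently of the current value of~$M_i$.

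For item~2, I would first establish the one-step bound: if $(a,b)$ is an edge of $\dG(\tau)$ then $C_b(\tau) \leq C_a(\tau-1)+1$. Indeed $a \in \In_b(\tau)$, so $\min_{k \in \In_b(\tau)} C_k(\tau-1) \leq C_a(\tau-1)$, and then line~\ref{line:min} together with $\clmod{c}{PM_b} \leq c$ (valid because $\min_k C_k(\tau-1)+1 \geq 0$) yields the claim. An edge $(i,j)$ of $\dG(s:t)$ is realised by a path $i = k_0, k_1, \dots, k_{t-s+1} = j$ with $(k_r,k_{r+1})$ an edge of $\dG(s+r)$; telescoping the one-step bound along this path gives $C_{k_{r+1}}(s+r) \leq C_i(s-1) + (r+1)$ by induction on~$r$, and setting $r = t-s$ produces $C_j(t) \leq C_i(s-1)+t-s+1$.

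Item~3 is the only place where the choice of~$g$ enters. The variable $M_i$ is modified only at line~\ref{line:max} and at the conditional line~\ref{line:app_g}. Because of the self-loop we have $i \in \In_i(t)$, so line~\ref{line:max} already yields $\max_{j \in \In_i(t)} M_j(t-1) \geq M_i(t-1)$; when the guard is false this is the value of $M_i(t)$ and there is nothing more to prove. When the guard is true, this value is replaced by its image under~$g$, and monotonicity is preserved in the inflationary strategy because $g(x) > x$, and in the constant strategy because $g \equiv M$ while every reachable value of $M_i$ stays at most~$M$ in the finite-state instantiation. The one genuinely delicate point of the whole lemma is exactly this interaction between the conditional $g$-application and the max step; items~1 and~2 are otherwise routine inductions, whose only subtlety is to keep the two roles of $\clmod{\cdot}{PM_i}$ -- preservation of the residue modulo~$P$ versus non-increase of the integer value -- carefully apart.
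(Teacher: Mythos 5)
Your proof is correct, and it is essentially the argument the paper has in mind: the paper omits the proof of this lemma as ``a direct consequence of the code'', and your round-by-round unwinding of lines~\ref{line:min}--\ref{line:app_g} (residue preservation of $\clmod{\cdot}{PM_i}$ modulo~$P$ for item~1, its non-increase on non-negative integers telescoped along a path realizing the edge of $\dG(s:t)$ for item~2, and the self-loop plus the max rule for item~3) is exactly that verification.

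One remark on item~3, which you rightly single out as the only delicate point. As stated, the lemma is for an arbitrary non-decreasing~$g$, and in that generality the claim is actually false in the self-stabilization model: if $g$ is the constant~$M$ and the adversary initializes $M_i(0) > M$ (initial states are arbitrary in~${\cal Q}$, with $M_i$ ranging over~$\IN$ in Algorithm~\ref{algo:2:SAP}), then a firing of the guard at line~\ref{line:app_g} sets $M_i(1) = M < M_i(0)$. So monotonicity genuinely requires $x \leq g(x)$ on the range of the variables~$M_i$. Your inflationary case is fine, but your justification for the constant strategy --- that ``every reachable value of $M_i$ stays at most $M$'' --- is the one imprecise spot: reachability does not constrain initial values under arbitrary initialization. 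What saves the constant strategy is rather that the paper then uses the specialized algorithm $\FP$ (Algorithm~\ref{algo:2:FP}), in which $M_i$ is not a variable at all but the hard-wired constant~$M$, so item~3 is vacuous there; equivalently, one may declare the state space so that $M_i$ ranges over $\{0,\dots,M\}$, which bounds the initial values as well. With that reading your argument is complete, and it in fact sharpens the paper's statement by making explicit the hypothesis ($g$ inflationary, or $M_i$ hard-wired) under which item~3 --- and hence the later results that invoke it, such as Lemma~\ref{lem:Ci} and Lemma~\ref{lem:propagc} --- actually holds.
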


\begin{lem}\label{lem:Ci}
For every  round $t\geq 1$ and every node $i\in V$, one of the following statements is true:
	\begin{enumerate}
		\item $C_i(t)  $ is positive and $C_i(t) = 1 + C_{\nmin{i}{t}} (t-1)$
		\item $C_i(t) =0$, $C_i(t-1) = PM_i(t-1) -1$, and $\nmin{i}{t} = i $. 
		\end{enumerate}
\end{lem}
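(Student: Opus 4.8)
The starting point is to read off the exact update applied to $C_i$ from the code. Since line~\ref{line:max} (and the subsequent conditional application of $g$) modifies $M_i$ only \emph{after} line~\ref{line:min} has recomputed $C_i$, the modulus used for $C_i$ at round $t$ is the \emph{old} period $PM_i(t-1)$. By the definition of $\nmin{i}{t}$, this gives
$$ C_i(t) = \clmod{\, C_{\nmin{i}{t}}(t-1) + 1 \,}{PM_i(t-1)}. $$
The whole lemma then amounts to deciding whether the argument $C_{\nmin{i}{t}}(t-1)+1$ triggers a modular wraparound: no wraparound yields statement~(1), and a wraparound yields statement~(2). So the plan is first to pin down the range of $C_{\nmin{i}{t}}(t-1)$, and then to split into these two cases.

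For the range, I would establish the invariant $C_i(t) \leq PM_i(t-1)-1$ for every $t\geq 1$: indeed $C_i(t)$ is by construction a residue modulo $PM_i(t-1)$, hence lies in $\{0,\dots,PM_i(t-1)-1\}$. Shifting the index and using that $M_i$ is non-decreasing (Lemma~\ref{lem:general}(3)), this gives $C_i(t-1)\leq PM_i(t-2)-1 \leq PM_i(t-1)-1$ for $t\geq 2$; for $t=1$ the same bound $C_i(0)\leq PM_i(0)-1$ is exactly the initial condition guaranteed by the omitted normalization rule. Because $i$ has a self-loop in $\dG(t)$, the node $i$ is itself a candidate in the minimization defining $\nmin{i}{t}$, so $C_{\nmin{i}{t}}(t-1)\leq C_i(t-1) \leq PM_i(t-1)-1$. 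Writing $c = C_{\nmin{i}{t}}(t-1)+1$, we therefore have $1 \leq c \leq PM_i(t-1)$.

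Now I would case-split on the value of $c$. If $c \leq PM_i(t-1)-1$, then $\clmod{c}{PM_i(t-1)} = c$, so $C_i(t) = C_{\nmin{i}{t}}(t-1)+1 \geq 1$, which is statement~(1). If instead $c = PM_i(t-1)$, then $\clmod{c}{PM_i(t-1)} = 0$, so $C_i(t)=0$; moreover $c = PM_i(t-1)$ unwinds to $C_{\nmin{i}{t}}(t-1) = PM_i(t-1)-1$. Since $\nmin{i}{t}$ attains the minimum over $\In_i(t)$ and $i \in \In_i(t)$, we get $C_i(t-1) \geq PM_i(t-1)-1$; combined with the range bound $C_i(t-1)\leq PM_i(t-1)-1$ this forces $C_i(t-1)=PM_i(t-1)-1$. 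In particular $i$ itself realizes the minimum, so $i$ is a legitimate choice for $\nmin{i}{t}$, giving $\nmin{i}{t}=i$ and completing statement~(2).

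The only genuinely delicate point is the range invariant, where one must be careful that $C_i$ is reduced modulo the \emph{previous} value $PM_i(t-1)$ rather than the freshly updated period, and that the monotonicity of $M_i$ is precisely what lets the bound obtained at round $t-1$ survive the change of modulus. The base case $t=1$ must be handled separately and relies on the normalization $C_i(0)\leq PM_i(0)-1$ provided by the rule omitted from the pseudo-code. The last subtlety is purely a matter of reading the definition of $\nmin{i}{t}$ as ``any minimizer'': in the wraparound case one simply checks that $i$ is among the minimizers and selects it.
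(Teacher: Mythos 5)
Your proof is correct and follows essentially the same route as the paper's: the key chain of inequalities $C_{\nmin{i}{t}}(t-1) \leq C_i(t-1) \leq PM_i(t-1)-1$, justified by the self-loop, the initial normalization for $t=1$, and the monotonicity of $M_i$ for $t\geq 2$, is exactly the paper's argument. You merely spell out the wraparound case analysis that the paper leaves implicit.
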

\begin{proof}
The lemma just relies on the following series of inequalities:
	$$ C_{\nmin{i}{t}} (t-1) \leq C_i(t-1) \leq  PM_i (t-1) -1.$$
The last inequality is clear for $t=1$, and for $t\geq 2$, it is a consequence of $ C_i(t-1) \leq PM_i(t-2) -1 $
	and of the fact that $M_i$ is non-decreasing.
\end{proof}
\begin{lem}\label{lem:path}
If $(i,j)$ is an edge in $\dG(s : t)$, then one of the following statements is true:
	\begin{enumerate}
		\item $C_j(t) \equiv_P C_i(s - 1) + t - s +1 $;
		\item $M_j(t) \geq g(M_i(s-1))$.
	\end{enumerate}
\end{lem}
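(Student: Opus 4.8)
The plan is to prove the dichotomy by induction on the length $t-s$ of the round interval, peeling off the last factor via $\dG(s:t)=\dG(s:t-1)\circ\dG(t)$. An edge $(i,j)$ in $\dG(s:t)$ then supplies an intermediate node $k$ with $(i,k)$ an edge in $\dG(s:t-1)$ and $k\in\In_j(t)$; I would apply the induction hypothesis to $(i,k)$ and analyze only the single update performed by $j$ at round~$t$. The base case $t=s$ is exactly this single-round analysis with $k=i$. Throughout, the dichotomy at a node/round is governed by whether the test of line~\ref{line:app_g} triggers, so the whole proof reduces to understanding one round at $j$.

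First I would treat the case where $j$ does \emph{not} detect a desynchronization at round~$t$. Then all clocks received by $j$ are mutually congruent modulo~$P$, so in particular the minimizing value of line~\ref{line:min} satisfies $\min_{p\in\In_j(t)}C_p(t-1)\equiv_P C_k(t-1)$. Using the identity $\big[\,[\,c\,]_{_{PM_j}}\,\big]_{_{P}}=[\,c\,]_{_{P}}$, which makes the reduction modulo~$PM_j$ invisible modulo~$P$, line~\ref{line:min} gives $C_j(t)\equiv_P C_k(t-1)+1$. If statement~1 held for $(i,k)$ at round $t-1$, telescoping this congruence yields statement~1 for $(i,j)$ at round~$t$; if instead statement~2 held for $(i,k)$, then since line~\ref{line:app_g} did not fire, line~\ref{line:max} gives $M_j(t)=\max_{p\in\In_j(t)}M_p(t-1)\geq M_k(t-1)\geq g(M_i(s-1))$, reestablishing statement~2.

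Next I would treat the case where $j$ \emph{does} detect a desynchronization at round~$t$, so that line~\ref{line:app_g} fires and $M_j(t)=g\big(\max_{p\in\In_j(t)}M_p(t-1)\big)$. Since $k\in\In_j(t)$ and $g$ is non-decreasing, it suffices to bound $M_k(t-1)$ from below. If statement~2 held for $(i,k)$, then $M_k(t-1)\geq g(M_i(s-1))$, so $M_j(t)\geq g(g(M_i(s-1)))$; if statement~1 held, then $M_k(t-1)\geq M_i(s-1)$ by the non-decreasing propagation of $M$ along the path (line~\ref{line:max} together with Lemma~\ref{lem:general}.3), so $M_j(t)\geq g(M_i(s-1))$. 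In either subcase statement~2 is reestablished.

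The hard part will be the bookkeeping around the conditional application of $g$: the naïve carry-forward in the last subcase leaves a spurious second iterate $g(g(\cdot))$, and for a general non-decreasing $g$ there is no inequality between $g(g(x))$ and $g(x)$. This is precisely where the two admissible strategies for $g$ enter: for an inflationary $g$ one has $g(x)\geq x$, hence $g(g(x))\geq g(x)$, while for the constant choice $g\equiv M$ one has $g(g(x))=g(x)$. Equivalently, I would lean on Lemma~\ref{lem:general}.3, which guarantees that $M$ is non-decreasing and therefore that applying $g$ at a desynchronized round never erases a previously acquired lower bound. Getting this interaction exactly right, together with the modular-wrap argument for line~\ref{line:min}, is the only delicate point; the clock congruence and the max-propagation of $M$ are otherwise routine telescoping.
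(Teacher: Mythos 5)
Your plan is sound and genuinely different in organization from the paper's proof. The paper does not induct on the interval length: it fixes a path $i=i_0,i_1,\dots,i_\ell=j$ realizing the edge $(i,j)$ in $\dG(s:t)$ and splits on whether every step along it satisfies $C_{i_{m+1}}(s+m)\equiv_P 1+C_{i_m}(s+m-1)$. If yes, telescoping gives statement~1; if not, it takes the \emph{first} failing step and argues -- exactly as you do locally in your Case~B, via the observation that $i_{k+1}$ received both $C_{i_k}(s+k-1)$ and some value not congruent to it modulo~$P$ -- that line~\ref{line:app_g} fires there, yielding $M_{i_{k+1}}(s+k)\geq g(M_{i_k}(s+k-1))$; this bound is then carried to $j$ by monotone propagation of $M$ along the remainder of the path. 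What the first-failing-step device buys is that $g$ appears exactly once in the whole argument, so the iterate $g(g(\cdot))$ produced by your induction never surfaces; what your peeling induction buys is that every step is a purely local one-round analysis, with the dichotomy maintained as an invariant.

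The delicate point you flag is real, and you should know that the paper's proof does not avoid it -- it buries it. Both of the paper's propagation inequalities, the implicit $M_{i_k}(s+k-1)\geq M_i(s-1)$ along the prefix and the stated $M_j(t)\geq M_{i_{k+1}}(s+k)$ along the suffix, require that executing line~\ref{line:app_g} never decreases $M$: a step of the path can be clock-synchronized and yet fire line~\ref{line:app_g} at $i_{m+1}$ because of its \emph{other} in-neighbours. This is precisely the inflationary-type property you isolate, and it is genuinely needed: for a non-decreasing $g$ with $g(g(x))<g(x)$ (say $g=3$ on $\{0,\dots,5\}$ and $g=5$ above), one can build a five-node, two-round execution in which line~\ref{line:app_g} fires once at an intermediate node of the path (triggered by a garbage neighbour, while the path step stays synchronized) and once at $j$, leaving $M_j(t)=3<5=g(M_i(s-1))$ with desynchronized clocks -- a counterexample to the lemma as literally stated. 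So the statement is only valid for the instantiations actually used later (constant or inflationary $g$, as in Corollaries~\ref{cor:FP} and~\ref{cor:SAP}); your appeal to Lemma~\ref{lem:general}.3 is the right instinct, with the caveat that what you need is the fact underlying its (omitted) proof -- applying $g$ after line~\ref{line:max} does not decrease the value -- rather than its statement, which only asserts time-monotonicity at a fixed node. Your explicitness on this point is an improvement over the paper, not a defect of your proof.
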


\begin{proof}
Let $ i= i_0, i_1, \cdots ,   i_{\ell} =j $, with $\ell = t-s+1$,  be a path in the round interval $ [ s , t ] $ corresponding to the 
	edge $(i,j)$ in $\dG(s : t)$.
There are two possible cases:
\begin{enumerate}
\item For each index~$k \in \{ 0, \cdots, \ell -1 \}$, it holds that
	$ C_{i_{k+1}} (s + k ) \equiv_{P}  1 + C_{i_k} (s + k -1) $, which implies 
	$$ C_j(t) \equiv_P C_i(s - 1) + t - s +1 .$$
\item Otherwise, let $i_{k +1}$ be the first  node in this path such that 
	$$ C_{i_{k+1}} (s + k ) \not\equiv_{P}  1 + C_{i_{k}} (s + k -1)  .$$
In round~$s+k$, the node $i_{k+1}$  receives the value $ C_{i_{k}} (s + k -1) $ from~$i_{k}$,
	and 
	it also receives  some value $c$ with  $c\not\equiv_P C_{i_{k}} (s + k -1) $. 
Then $i_{k+1}$ executes line~\ref{line:app_g}, which implies that $M_{i_{k+1}} (s +k) \geq g( M_{i_{k}} (s +k -1)$,
	and thus 
	$$ M_j(t) \geq M_{i_{k+1}} (s +k) \geq g( M_{i_{k}} (s +k -1)) \geq g(M_i(s-1)) .$$

\end{enumerate}
\end{proof}

\subsection{Strong connectivity with bounded delay}

We now determine some functions $g$ for which  the $\SAP$ algorithm achieves mod\,-$P$ synchronization 
	in the case of  strong connectivity with bounded delay. 
We fix such a dynamic graph~$\dG$ and an execution of $\SAP$ with $\dG$, 
	and we let $\diam(\dG)= D$.
	
\begin{lem} \label{lem:zerooustab}
For every $t  \in \IN$, one of the following statements is true:
	\begin{enumerate}
	\item there exist a node $i \in V $ and an integer  $ d \in \{1, \dots, D-1\}$  such that $C_i(t + d) = 0$;
	\item  the system is synchronized in round $t  +D$.
	\end{enumerate}
\end{lem}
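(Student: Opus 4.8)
The plan is to fix $t$ and split on whether some clock vanishes during the open window $\{t+1,\dots,t+D-1\}$. If there is a node $i$ and an integer $d\in\{1,\dots,D-1\}$ with $C_i(t+d)=0$, then statement~(1) holds and there is nothing to prove; so I would assume throughout that no clock resets in rounds $t+1,\dots,t+D-1$ and aim to establish that the system is synchronized in round $t+D$.

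First I would record the two facts that drive the argument. Because each node reduces its clock modulo $PM_i$, which is a multiple of $P$, the backward min-edge is \emph{always} synchronized modulo $P$: for every node $i$ and round $r$ one has $C_i(r)\equiv_P 1+C_{\nmin{i}{r}}(r-1)$, whether or not a reset occurs. Iterating along the backward min-path then yields, for every node $j$, a synchronized path back to round $t$ and hence the congruence $C_j(t+D)\equiv_P D+C_{j_0}(t)$, where $j_0$ is the node reached at round $t$. Second, under the no-reset assumption Lemma~\ref{lem:Ci} turns each min-edge into an exact increment, so the global minimum advances by one per round: writing $\mu(r)=\min_{i\in V}C_i(r)$, the standard two-sided estimate (every clock is at least $1+\mu(r-1)$, while the previous minimizer attains exactly $1+\mu(r-1)$) gives $\mu(t+k)=\mu(t)+k$ for $0\le k\le D-1$.

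Next I would pin down the clocks that do \emph{not} reset in round $t+D$. For such a node $j$ the entire backward min-path consists of exact increments, so $C_j(t+D)=D+C_{j_0}(t)$ holds with equality. Since $\diam(\dG)=D$, the round-$t$ minimizer $m_0$ satisfies $\e_{\dG}(m_0)\le D$, so $(m_0,j)$ is an edge of $\dG(t+1:t+D)$, and Lemma~\ref{lem:general}(2) gives $C_j(t+D)\le C_{m_0}(t)+D=\mu(t)+D$. Comparing the two expressions forces $C_{j_0}(t)\le\mu(t)$, hence $C_{j_0}(t)=\mu(t)$ and $C_j(t+D)=\mu(t)+D$. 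Thus every non-resetting clock equals $\mu(t)+D$ exactly, and in particular all such clocks agree modulo $P$.

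It remains to reconcile the nodes that do reset in round $t+D$, and this is where I expect the real work to lie. Such a node $j$ has $C_j(t+D)=0$, and by Lemma~\ref{lem:Ci} it is its own minimizing in-neighbor with $C_j(t+D-1)=PM_j(t+D-1)-1\equiv_P -1$. The target is $\mu(t)+D\equiv_P 0$, so that the reset value $0$ is congruent to the common value $\mu(t)+D$ of the non-resetting clocks (if every node resets, all clocks are $0$ and nothing is needed). The main obstacle is that the exact-value argument breaks at the wrapping step, so the backward path yields only $0\equiv_P D+C_{j_0}(t)$ with no direct control on $C_{j_0}(t)$. I would resolve this by exploiting that $j$ sits at the maximal value $PM_j(t+D-1)-1$ while being its own minimum: every in-neighbor of $j$ in round $t+D$ carries a clock at least this large at round $t+D-1$, whereas the global minimum at that round is $\mu(t+D-1)=\mu(t)+D-1\le PM_j(t+D-1)-1$; relating $j$ to a minimizing node through the connectivity over $[t+1,t+D]$ should force $PM_j(t+D-1)-1\equiv_P \mu(t)+D-1$, which is exactly $\mu(t)+D\equiv_P 0$. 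The delicate point is to carry out this comparison with only a diameter-$D$ (rather than $(D-1)$) reachability bound; the saving feature is precisely that the period $PM_j$ is a multiple of $P$, so the wrap to $0$ preserves the mod-$P$ value and the always-synchronized backward min-path absorbs the single stalled round at the self-looping node $j$.
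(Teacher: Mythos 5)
Your route is genuinely different from the paper's and, up to one unexecuted step, it is correct. The paper proves by induction the exact formula $C_i(t+d)=d+\min_{j\in\In_i(t+1:t+d)}C_j(t)$ for $1\le d\le D-1$, then applies the update rule once more: since $\dG(t+1:t+D)$ is complete, every node computes $C_i(t+D)=\bigl[D+\min_{k\in V}C_k(t)\bigr]_{PM_i(t+D-1)}$, and because every modulus $PM_i(t+D-1)$ is a multiple of $P$, these values all agree modulo $P$; wrapping and non-wrapping nodes are treated uniformly, with no case split at all. You instead push mod-$P$ congruences along backward min-paths, determine the non-resetting clocks exactly (your squeeze of $C_j(t+D)=D+C_{j_0}(t)$ against Lemma~\ref{lem:general}(2) is correct, as is the advance $\mu(t+k)=\mu(t)+k$ of the global minimum $\mu(r)=\min_{i\in V}C_i(r)$ for $0\le k\le D-1$), and you reduce the lemma to showing $\mu(t)+D\equiv_P 0$ whenever some node wraps at round $t+D$ while another does not --- which is indeed the crux.

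That last step, which you only assert ``should'' work and flag as delicate, is the one genuine hole in the write-up; it does close, as follows. Let $j$ wrap at round $t+D$ and let $m_0$ be a round-$t$ minimizer. Since $\e_{\dG}(m_0)\le D$, the pair $(m_0,j)$ is an edge of $\dG(t+1:t+D)=\dG(t+1:t+D-1)\circ\dG(t+D)$, so there exists $k\in\In_j(t+D)$ such that $(m_0,k)$ is an edge of $\dG(t+1:t+D-1)$, whence $C_k(t+D-1)\le C_{m_0}(t)+D-1=\mu(t)+D-1$ by Lemma~\ref{lem:general}(2). By Lemma~\ref{lem:Ci}, the wrapping node satisfies $\nmin{j}{t+D}=j$, i.e.\ $j$ is its own minimizing in-neighbor, so $PM_j(t+D-1)-1=C_j(t+D-1)\le C_k(t+D-1)\le\mu(t)+D-1$; trivially $C_j(t+D-1)\ge\mu(t+D-1)$, which equals $\mu(t)+D-1$ by your advance identity. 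Hence $PM_j(t+D-1)=\mu(t)+D$, and since $P$ divides $PM_j(t+D-1)$ this is exactly $\mu(t)+D\equiv_P 0$. With this paragraph inserted your proof is complete. The comparison is then instructive: your argument extracts strictly more information (the exact common value $\mu(t)+D$ of all non-wrapping clocks, and the fact that a wrapping node must sit exactly at its modulus), whereas the paper's modular viewpoint---carry the reduction $[\,\cdot\,]_{PM_i}$ through a single complete-graph product---avoids the resetting/non-resetting dichotomy altogether and is correspondingly shorter.
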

\begin{proof}
Let us assume that all the counters $C_i$ are non-zero in the round interval $[t  + 1, t  +D-1]$.
First, we prove by induction on  $d$, $ 1 \leq d \leq D-1$, that
\begin{equation}
	\label{eq:minprod}
	\forall i \in V, \ \ \ C_i(t  +d) = d + \underset{j \in \In_i(t +1:t +d)}{\min}~C_j(t ) .
\end{equation}
\begin{enumerate}
\item The base case $d=1$ is an immediate consequence of Lemma~\ref{lem:Ci}.
\item Inductive step: let us assume that Eq.~(\ref{eq:minprod}) holds for some $d$ with $1 \leq d < D-1$.
For every node~$i$ in $V$, we have
\begin{align*}
	C_i(t +d+1) & = 1 + \min_{j\in \In_i(t +d+1)} C_{j} (t  + d) \\
			     &  = 1+ d + \underset{j \in \In_i(t +d+1)}{\min}~\left ( \underset{k \in \In_i(t +1:t +d)}{\min}~C_k(t ) \right )  \\
			     & =  1 + d  + \underset{k \in \In_i(t +1:t +d+1)}{\min}~C_k(t ).
\end{align*}
The first equality is a direct consequence of Lemma~\ref{lem:Ci}, the second one is by inductive hypothesis, 
	and the third one is due to the fact that
	$\dG(t +1:t +d+1) = \dG(t +1:t +d) \circ \dG(t +d+1)$.
\end{enumerate}
This completes the proof of Eq~(\ref{eq:minprod}) for every integer $d \in \{1\leq d \leq D-1\}$.

Then for each node $i$, we get 
\begin{align*}
	C_i(t +D) & = \left [ 1 + \underset{j \in \In_i(t +D)}{\min}~C_j(t  + D -1) \right ]_{PM_i(t +D-1)} \\
		  &  = \left [ D + \underset{j \in \In_i(t +D)}{\min}~\left ( \underset{k \in \In_i(t +1:t +D-1)}{\min}~C_k(t )\right )  \right ]_{PM_i(t +D-1)}  \\
			     & =  \left [ D + \underset{k \in V}{\min}~C_k(t )  \right ]_{PM_i(t +D-1)} .
\end{align*}
The second equality is  due to Eq~(\ref{eq:minprod}) at round $t +D-1$, and the third one is 
	a consequence of $\dG(t +1:t +D-1) \circ \dG(t +D) = \dG(t +1:t +D) = V$.
It follows that all the counters $C_i(t  +D)$ are equal modulo~$P$, i.e., 
	the system is  synchronized in round $t +D$.
\end{proof}

\begin{lem} \label{lem:propagc}
Let $t$ be a round in which $C_i(t) + D \leq PM_i(t)$ holds for each node $i$.
Then the system is synchronized in round $ t + D$.
\end{lem}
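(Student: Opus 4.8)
The plan is to show that under the hypothesis $C_i(t) + D \leq PM_i(t)$ for every node $i$, no clock can wrap around (reach $0$ by the modular reset of Lemma~\ref{lem:Ci}) during the next $D$ rounds, and therefore the first alternative of Lemma~\ref{lem:zerooustab} is ruled out, forcing the second alternative -- synchronization in round $t+D$. The key observation is that the modular reductions in line~\ref{line:min} are harmless precisely when there is ``enough room'' between the largest clock value and the shortest period $PM_i$; the hypothesis quantifies exactly this room as $D$, the dynamic diameter, which is the maximal distance a minimum value needs to travel.

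First I would fix an arbitrary node $i$ and an integer $d \in \{1, \dots, D-1\}$, and aim to prove $C_i(t+d) \neq 0$, which is what the first alternative of Lemma~\ref{lem:zerooustab} denies. Suppose toward a contradiction that $C_i(t+d) = 0$. By Lemma~\ref{lem:Ci}, case~2 must hold, so $C_i(t+d-1) = PM_i(t+d-1) - 1$. The strategy is to trace this large value back to its origin at round $t$ using a minimum-propagation argument in the spirit of Eq.~(\ref{eq:minprod}) from the previous lemma: the clock value carried forward along the chain of $\nmin{}{}$ predecessors satisfies, by repeated application of Lemma~\ref{lem:Ci}, an inequality relating $C_i(t+d-1)$ to $\min_{j} C_j(t)$ plus the elapsed number of rounds $d-1$. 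Concretely I expect to obtain a bound of the form $C_i(t+d-1) \leq (d-1) + \min_{j \in \In_i(t+1:t+d-1)} C_j(t)$ before any reset occurs, which one justifies by an induction identical in structure to the one already carried out in Lemma~\ref{lem:zerooustab} (using that no reset has happened yet along the relevant path, since we are assuming the first reset is the one producing $C_i(t+d)=0$).

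Combining this with the two facts $C_i(t+d-1) = PM_i(t+d-1) - 1$ and $PM_i(t+d-1) \geq PM_i(t)$ (because $M_i$ is non-decreasing, Lemma~\ref{lem:general}) should yield
$$ PM_i(t) - 1 \leq PM_i(t+d-1) - 1 = C_i(t+d-1) \leq (d-1) + \min_{j} C_j(t) . $$
Since some node $j$ attains the minimum, $\min_j C_j(t) \leq C_j(t)$ for that $j$, and the hypothesis gives $C_j(t) \leq PM_j(t) - D$; one then needs to relate $PM_j(t)$ back to $PM_i(t)$, which holds because the minimizing value propagates along edges and $M$ only grows along paths. This chain of inequalities, with $d - 1 \leq D - 2$, should produce the contradiction $PM_i(t) - 1 \leq (D-2) + PM_i(t) - D = PM_i(t) - 2$, i.e. $-1 \leq -2$, ruling out the wrap-around. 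Having excluded alternative~1 of Lemma~\ref{lem:zerooustab} for every $d \in \{1, \dots, D-1\}$, alternative~2 must hold, giving synchronization in round $t+D$.

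The main obstacle I anticipate is the bookkeeping of the $M$ variables along the propagation path: the clean inequality $C_i(t+d-1) \leq (d-1) + \min_j C_j(t)$ requires that the minimum value propagating to $i$ has not been subjected to any modular reset in the interval, and one must argue that the relevant period $PM$ at the source node $j$ at round $t$ dominates the quantities appearing in the bound. Keeping track of \emph{which} node's period $M$ controls \emph{which} clock value as one slides along the temporal path -- and confirming that the hypothesis $C_j(t)+D \leq PM_j(t)$ applied at the correct source node $j$ closes the loop -- is where the care is needed, rather than in any single estimate.
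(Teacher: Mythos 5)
Your overall plan coincides with the paper's: rule out any clock reaching $0$ in rounds $t+1,\dots,t+D-1$, so that alternative~1 of Lemma~\ref{lem:zerooustab} is excluded and alternative~2 forces synchronization at round $t+D$. But your execution has a genuine gap, and it sits exactly where you said ``care is needed.'' After tracing the minimum back to a source node $j\in\In_i(t+1:t+d-1)$, your chain of inequalities needs $PM_j(t) \leq PM_i(t)$ (this is how your displayed contradiction replaces $PM_j(t)$ by $PM_i(t)$). That inequality is unjustified and false in general: the $M$-variables of distinct nodes at the same round are not comparable --- in the self-stabilizing model $j$ may start with, or may have grown to, an arbitrarily larger period than $i$. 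What ``$M$ grows along paths'' could give you is the differently-timed inequality $M_j(t) \leq M_i(t+d-1)$, with which the contradiction still closes (one gets $PM_i(t+d-1)-1 \leq (d-1)+PM_i(t+d-1)-D$, i.e.\ $D\leq d$). But even this repaired step is unavailable at the required level of generality: line~\ref{line:app_g} applies $g$ \emph{after} the maximum of line~\ref{line:max}, so $M$-propagation along an edge holds only when $g$ is inflationary, whereas Lemma~\ref{lem:propagc} is used inside Theorem~\ref{thm:fcsync}, which permits non-inflationary $g$ (e.g.\ the constant functions underlying the $\FP$ specialization). Your other delicate point --- validity of the min-propagation formula --- is fixable by taking $d$ minimal among all wrap-arounds, so the real obstruction is the cross-node period comparison.

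The missing idea is that no tracing back to the source of the minimum is needed at all, because of the self-loops. The paper argues directly: for every node $i$ and every $d\in\{1,\dots,D-1\}$,
\begin{equation*}
1 + \min_{j \in \In_i(t+d)} C_j(t+d-1) \ \leq \ 1 + C_i(t+d-1) \ < \ D + C_i(t) \ \leq \ PM_i(t) \ \leq \ PM_i(t+d-1) ,
\end{equation*}
where the first inequality is the self-loop at $i$ in $\dG(t+d)$, the second follows from claim~2 of Lemma~\ref{lem:general} applied to the self-loop edge $(i,i)$ of $\dG(t+1:t+d-1)$ (so $C_i(t+d-1)\leq C_i(t)+d-1$), the third is the hypothesis of the lemma, and the fourth is claim~3 of Lemma~\ref{lem:general}. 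Hence the quantity reduced modulo $PM_i(t+d-1)$ at line~\ref{line:min} is positive and strictly below the modulus, so $C_i(t+d)\neq 0$. Every term in this chain refers to the single node $i$, so the cross-node comparison of periods that blocks your argument never arises, and no proof by contradiction is needed.
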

\begin{proof}
Let $i$ be any node, and let  $d \in \{1, \dots, D-1\}$.
We have
	\begin{align*}
		1 + \underset{j \in \In_i(t+d)}{\min}~C_j(t+d-1) & \leq  1 + C_i(t+d-1)\\
		&  <  D + C_i(t) \\
		& \leq PM_i(t) \\
		&\leq PM_i(t+d -1) .
	\end{align*}
The 	first inequality is due to the self-loop at node~$i$ in $\dG(t + d)$,
	the second and fourth ones are direct consequences of  the last two claims  in Lemma~\ref{lem:general},
	and the third inequality is the basic assumption of the lemma.
It follows that $C_i(t+d) \neq 0$, and Lemma~\ref{lem:zerooustab} shows that  
	the system is synchronized in round $t+D$.
\end{proof}

For any integer $t\in \IN$, we let
	$ M (t) \eqdef \min_{i\in V}  M_i(t) $.

\begin{lem} \label{lem:croissantsc}
For all non-negative integer $q \in \IN$, one of the following statements is true:
	\begin{enumerate}
		\item the system is synchronized in round $ q D $;
		\item $ M( q D) \geq g^q(M(0)) $.
	\end{enumerate}
\end{lem}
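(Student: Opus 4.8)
The plan is to argue by induction on $q$, combining the forward propagation of synchronization (first item of Lemma~\ref{lem:general}) with the period-growth mechanism captured by Lemma~\ref{lem:path}. The base case $q = 0$ is immediate: statement~2 then reads $M(0) \geq g^0(M(0)) = M(0)$, which holds trivially since $g^0$ is the identity.

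For the inductive step, I assume the dichotomy for $q$ and examine round $(q+1)D$. If the system is synchronized in round $qD$, then by the first item of Lemma~\ref{lem:general} it is still synchronized in round $(q+1)D \geq qD$, so statement~1 holds for $q+1$. Otherwise the inductive hypothesis gives $M(qD) \geq g^q(M(0))$, and it suffices to show that either the system is synchronized in round $(q+1)D$ or $M((q+1)D) \geq g^{q+1}(M(0))$. So I may assume the system is not synchronized in round $(q+1)D$; by the contrapositive of forward propagation it is not synchronized in round $qD$ either, whence there exist two nodes $a,b$ with $C_a(qD) \not\equiv_P C_b(qD)$.

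The core of the argument is then to establish $M_j((q+1)D) \geq g(M(qD))$ for every node $j$. Since $\diam(\dG) = D$ means every node has eccentricity at most $D$, the pairs $(a,j)$ and $(b,j)$ are edges in $\dG(qD+1 : (q+1)D)$ for each $j$. Applying Lemma~\ref{lem:path} to each of these two edges (with $s = qD+1$ and $t = (q+1)D$, so that $t - s + 1 = D$) yields, for the edge $(a,j)$, either $C_j((q+1)D) \equiv_P C_a(qD) + D$ or $M_j((q+1)D) \geq g(M_a(qD))$, and similarly for $(b,j)$ with $a$ replaced by $b$. Because $C_a(qD) \not\equiv_P C_b(qD)$, the two congruences are mutually exclusive, so at least one of the two growth bounds must hold; using that $g$ is non-decreasing and that $M_a(qD), M_b(qD) \geq M(qD)$, this gives $M_j((q+1)D) \geq g(M(qD))$ in both cases. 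Taking the minimum over $j$ and invoking the inductive bound $M(qD) \geq g^q(M(0))$ with $g$ non-decreasing yields $M((q+1)D) \geq g(M(qD)) \geq g^{q+1}(M(0))$, which is statement~2 for $q+1$.

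I expect the main subtlety to be precisely the mutual-exclusivity step: a single target node $j$ cannot be clock-consistent modulo~$P$ with two witnesses whose clocks already disagree modulo~$P$, so strong connectivity (which guarantees paths of length exactly $D$ from both $a$ and $b$ to every $j$) forces at least one of the period variables $M_j$ to be boosted by $g$. This is the mechanism that converts a failure of synchronization into a guaranteed increase of the periods, and getting the round bookkeeping in Lemma~\ref{lem:path} to line up with the interval $[qD+1,(q+1)D]$ is the only point requiring care.
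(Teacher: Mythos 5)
Your proof is correct and follows essentially the same route as the paper: induction on $q$, the first item of Lemma~\ref{lem:general} for the synchronized case, and the diameter bound combined with Lemma~\ref{lem:path} to convert non-synchronization at round $qD$ into a guaranteed boost $M((q+1)D) \geq g(M(qD)) \geq g^{q+1}(M(0))$. The only cosmetic difference is that you unpack, via two explicit witnesses $a,b$ and the mutual exclusivity of the two congruences, the step the paper compresses into ``there exists a node $i$ with $C_j((q+1)D) \not\equiv_P C_i(qD)+D$'', and you establish the growth bound for every node $j$ rather than only for the node realizing $M((q+1)D)$.
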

\begin{proof}
We proceed by induction on $q$.
The base case $q=0$ is trivial.
For the inductive step, assume that the lemma holds in round $ q D$, and
	consider the two following cases:
\begin{enumerate}
\item The system is synchronized in round $ q D $.
The first claim in Lemma~\ref{lem:general} asserts that the system remains synchronized in round $(q+1)D$.
\item Otherwise, we have $ M( q D) \geq g^q(M(0))$.
Let $j$ be a node that realizes $M((q+1)D)$, i.e.,  $M_j((q +1)D) = M((q +1)D)$.
Since the system is not synchronized in round~$qD$, there exists a node $i$  in $V$ such that 
	$$C_j((q+1)D) \not\equiv_P C_i(q D) +D .$$
Because $D$ is the diameter of $\dG$, $(i,j)$ is an edge of $\dG(q D+1 :  q D+D)$, 
	and we obtain:
	$$M_j((q+1)D )  \geq   g(M_i(q D))  \geq  g (M(q D))  \geq  g^{q +1}(M(0)). $$
The first inequality is by Lemma~\ref{lem:path}, and the last two ones  are due to the fact 
	that the function $g$  is non-decreasing. 
\end{enumerate}
\end{proof}

 \begin{thm} \label{thm:fcsync}
In any execution with a dynamic graph whose diameter $D$ is finite, 
	the $\SAP$ algorithm achieves mod-$P$ synchronization  for any non-decreasing function $g: \IN \rightarrow \IN$
	such that $ g^*\left ( \left \lceil \frac{2D}{P} \right \rceil \right ) $ is finite.
Moreover, the stabilization time is bounded by $ \left ( g^*\left ( \left \lceil \frac{2D}{P} \right \rceil \right )  + 2 \right )D$.
\end{thm}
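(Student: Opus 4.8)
The plan is to run the argument in two phases: first use the growth of the guesses $M_i$ to push the common period above $2D$, and then exploit this slack to let a single min-sweep equalize all clocks modulo $P$. Set $Q \eqdef g^*\!\big(\lceil 2D/P\rceil\big)$, which is finite by hypothesis, so that $g^Q(0)\geq \lceil 2D/P\rceil$ by definition of $g^*$.

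For the first phase I would apply Lemma~\ref{lem:croissantsc} with $q=Q$. Either the system is already synchronized in round $QD\leq (Q+2)D$, and there is nothing left to prove, or $M(QD)\geq g^Q(M(0))\geq g^Q(0)\geq \lceil 2D/P\rceil$, the middle inequality holding because $g$ is non-decreasing and $M(0)\geq 0$. In the latter case $PM_i(t)\geq P\lceil 2D/P\rceil \geq 2D$ for every node $i$ and every round $t\geq QD$, since $M(QD)=\min_i M_i(QD)$ and each $M_i$ is non-decreasing by Lemma~\ref{lem:general}. Thus from round $QD$ onward there is a guaranteed gap of at least $2D$ between any clock and its period.

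The second phase must synchronize within the remaining $2D$ rounds. Here I would invoke Lemma~\ref{lem:zerooustab} at round $QD$: either the system synchronizes by round $(Q+1)D\leq(Q+2)D$, or some clock reaches $0$ at a round $r$ with $QD<r<(Q+1)D$. In the second case the node carrying this $0$ acts as a source from which the \emph{minimum} clock value spreads: using $\diam(\dG)=D$ together with the telescoping bound of Lemma~\ref{lem:general}, every edge $(i_0,j)$ of $\dG(r+1:r+D)$ gives $C_j(r+D)\leq C_{i_0}(r)+D=D$, so all clocks become small and agree modulo $P$ by round $r+D$. This is exactly the one-diameter synchronization captured by Lemma~\ref{lem:zerooustab} (and, once the clocks are bounded below the period, by Lemma~\ref{lem:propagc}) applied from round $r$, provided the sweep is not interrupted. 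Since $r+D\leq (Q+2)D-1$, this yields the claimed bound $\big(g^*(\lceil 2D/P\rceil)+2\big)D$.

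The delicate point — and the step I expect to be the main obstacle — is the interaction between the modular reduction in line~\ref{line:min} and the min-based propagation. The clean telescoping bound $C_j(r+D)\leq D$ presupposes that no \emph{further} wraparound to $0$ corrupts the sweep between rounds $r$ and $r+D$; one must argue that once the period exceeds $2D$ the freshly created $0$ keeps the reachable clock values strictly below $PM_i-1$ throughout the sweep, so that the propagation is effectively non-modular over this window. The tools I would use for this are Lemma~\ref{lem:Ci}, which says a clock can reset only from the value $PM_i-1$ and only when it is its own local minimum, together with the monotonicity facts of Lemma~\ref{lem:general}; the subtlety to watch is the borderline case where $PM_i$ equals $2D$ exactly, since then a node lagging behind the sweep could still sit at $PM_i-1=2D-1$. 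Ruling out such a spurious reset, and keeping track of the at most $D$-round gap between the appearance of the zero and round $(Q+1)D$ so that the total does not slip past $(Q+2)D$, is where the real care is required.
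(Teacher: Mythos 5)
Your overall plan---grow the periods via Lemma~\ref{lem:croissantsc}, locate a zero via Lemma~\ref{lem:zerooustab}, propagate from it, and finish with Lemma~\ref{lem:propagc}---is the paper's plan, and your first phase is correct. But the assembly of the second phase has a genuine gap, precisely at the point you flag as delicate. From ``$C_j(r+D)\leq C_{i_0}(r)+D=D$ for all $j$'' you conclude that the clocks \emph{agree modulo $P$} by round $r+D$. This does not follow. Note first that the Lipschitz bound (second claim of Lemma~\ref{lem:general}) holds regardless of wraparounds, since modular reduction only lowers values; so it does give $C_j(r+D)\leq D$, contrary to your remark that this bound presupposes a clean sweep. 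What \emph{does} require a wraparound-free sweep of length $D$ is the mod-$P$ agreement itself, and that cannot be guaranteed on the window $[r+1,r+D]$: at round $r$ the other nodes may hold values as large as $PM_i(r)-1\geq 2D-1$, and such a node can wrap to $0$ a few rounds later, before the influence of $i_0$'s zero (which needs up to $D$ rounds to arrive) reaches it. Two zero waves born in different rounds produce clock values with different offsets, which need not be congruent mod~$P$ at round $r+D$. Your proposed repair---that the fresh zero keeps all reachable values strictly below $PM_i-1$ throughout this first sweep---is therefore not available. The correct use of your own tools is: clocks $\leq D$ and periods $\geq 2D$ at round $r+D$, hence $C_i(r+D)+D\leq PM_i(r+D)$, hence Lemma~\ref{lem:propagc} synchronizes at round $r+2D$. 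Since your zero lies strictly between $QD$ and $(Q+1)D$, this only yields a bound of $(Q+3)D-1$, overshooting the claimed $\left(g^*\left(\left\lceil 2D/P\right\rceil\right)+2\right)D$ by up to $D$ rounds.

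The paper avoids both problems by interleaving the two phases instead of running them sequentially: it applies Lemma~\ref{lem:zerooustab} at round $(q_0-1)D$ (with $q_0=g^*\left(\left\lceil 2D/P\right\rceil\right)\geq 1$), so either the system is synchronized in round $q_0D$, or a zero $C_{i_0}\left((q_0-1)D+d\right)=0$ occurs with $d\in\{1,\dots,D-1\}$, i.e.\ \emph{before} round $q_0D$. The upper-bound propagation from this zero needs neither large periods nor a clean sweep, so it runs during the tail of the $M$-growth phase and gives $C_i(q_0D+d)\leq D$ for all $i$. By round $q_0D\leq q_0D+d$, Lemma~\ref{lem:croissantsc} together with the monotonicity of the $M_i$ already guarantees $PM_i(q_0D+d)\geq 2D$, so Lemma~\ref{lem:propagc} applies at round $q_0D+d$ and synchronizes by round $(q_0+1)D+d\leq (q_0+2)D-1$. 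This one-diameter shift of the window in which you hunt for the zero is the missing idea; with it, your argument becomes the paper's proof. (Your worry about the borderline case $PM_i=2D$ is harmless: in the final sweep clocks start at $\leq D$ and grow by at most one per round, so they stay $\leq 2D-1<PM_i$, which is exactly the inequality checked inside Lemma~\ref{lem:propagc}.)
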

\begin{proof}
We let $q_0 = g^*\left ( \left \lceil \frac{2D}{P} \right \rceil \right ) $;
	hence $q_0 \geq 1$.
By Lemma~\ref{lem:zerooustab}, either the system is synchronized in round $q_0 \, D $, or 
	there exist a node $i_0$ and an  integer  $ d \in \{1, \dots,  D - 1\}$ such that 
	$C_{i_0}(q_0 \, D + d -D) = 0$.
In that case, we have:
	$$ M( q_0 \, D) \geq g^{q_0} (M(0)) \geq \frac{2D}{P} .$$ 
The first inequality is Lemma~\ref{lem:croissantsc} and the second one is due to   $M(0) \geq 0$,
	$g$ is non-decreasing, and the definition of $q_0$.
	
Moreover, the digraph $\dG(q_0 \, D  -D +d + 1 : q_0 \, D + d  )$ is complete since $D$ 
	is the diameter of $\dG$, 
	and the second claim in Lemma~\ref{lem:general} shows that for every node $i $, we have
	$$ C_i (q_0 \, D + d ) \leq  D  .$$
Hence,
	\begin{align*}
		PM_i (q_0  D +d ) &\geq PM_i(q_0 \, D) \\
		&\geq PM(q_0 \, D) \\
		&\geq 2D  \\
		&\geq C_i (q_0 \,  D + d) + D
		\end{align*}
Finally, Lemma~\ref{lem:propagc} shows that the system is synchronized in round $(q_0+1)D + d$.		
\end{proof}

\subsection{Specializations of the \emph{SAP} algorithm}

Theorem~\ref{thm:fcsync} leads to two corollaries corresponding to two strategies for the choice of $g$.
Firstly, when some bound $B$ on the diameter of the dynamic graph is given, we may choose 
	$g$ to be the constant function $\lambda x.M$ where $M= \left \lceil \frac{2  B }{P}\right \rceil$.
Then we get $q_0 = 1$ and the pseudo-code of the algorithm $\FP$ may be simplified accordingly
	(cf. Algorithm~\ref{algo:2:FP}).

\begin{algorithm}[h]
\small
\begin{algorithmic}[1]
\REQUIRE{}
 \STATE $C_i \in  \IN$;
\ENSURE{}
  \STATE send $\langle C_i  \rangle$ to all \;
  \STATE receive $\langle C_{j_1} \rangle, \langle  C_{j_2} \rangle, \dots$ from the set $\!\In_i$ of in-neighbors \;
  \STATE $C_i \leftarrow \big [\underset{j \in S}{\min}~C_j + 1 \big ] _{PM}$ \label{line:minfp} \;
 \end{algorithmic}
	\caption{The  $\FP$ algorithm}
\label{algo:2:FP}
\end{algorithm}

\begin{cor}\label{cor:FP}
The $\FP$ algorithm solves the mod-$P$ synchronization  problem in any dynamic graph with a diameter
	less or equal to $PM/2$. 
\end{cor}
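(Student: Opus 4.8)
We must show that the $\FP$ algorithm—i.e., the $\SAP$ algorithm specialized to the constant function $g = \lambda x.M$—solves mod-$P$ synchronization in any dynamic graph whose diameter $D$ satisfies $D \leq PM/2$. The plan is to derive this as a direct instance of Theorem~\ref{thm:fcsync}, which already handles the general $\SAP$ algorithm for arbitrary non-decreasing $g$.

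**The approach.**
First I would recall the two-strategy dichotomy for choosing $g$ introduced just before the statement: here we are in the first case, where $g$ is the constant function equal to $M$. For such $g$, the definition of $g^*$ gives $g^*(m) = 1$ whenever $0 < m \leq M$ and $g^*(m) = \infty$ otherwise. So the entire argument hinges on checking that the hypothesis of Theorem~\ref{thm:fcsync}, namely that $g^*\!\left(\left\lceil \frac{2D}{P}\right\rceil\right)$ is finite, reduces under the assumption $D \leq PM/2$ to the simple numerical inequality $\left\lceil \frac{2D}{P}\right\rceil \leq M$.

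**Key steps, in order.**
I would proceed as follows. Set $g = \lambda x.M$ and observe it is trivially non-decreasing, so Theorem~\ref{thm:fcsync} applies in principle. Next, from $D \leq PM/2$ I get $\frac{2D}{P} \leq M$, and since $M$ is an integer this yields $\left\lceil \frac{2D}{P}\right\rceil \leq M$. Combined with $\left\lceil \frac{2D}{P}\right\rceil \geq 1$ (as $D \geq 1$), the argument of $g^*$ lies in the range $\{1,\dots,M\}$, so by the constant-function formula $g^*\!\left(\left\lceil \frac{2D}{P}\right\rceil\right) = 1 < \infty$. Hence the hypothesis of Theorem~\ref{thm:fcsync} is met, and mod-$P$ synchronization is achieved. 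As a bonus, I would note that the stabilization-time bound $\left(g^*\!\left(\left\lceil \frac{2D}{P}\right\rceil\right) + 2\right)D$ collapses to $3D$, recovering the ``three times the diameter'' bound advertised in the introduction, and I would remark that $M_i$ never changes from its constant value $M$, which is exactly why the pseudo-code of $\FP$ in Algorithm~\ref{algo:2:FP} can drop the $M_i$ variable entirely.

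**Main obstacle.**
Honestly there is no deep obstacle: the work is essentially the bookkeeping identity $g^*(m)=1$ for $0<m\le M$ applied to the single value $\left\lceil \frac{2D}{P}\right\rceil$. The one point requiring a moment's care is confirming that $D \le PM/2$ is genuinely \emph{equivalent} to $\left\lceil \frac{2D}{P}\right\rceil \le M$ (using integrality of $M$), rather than merely sufficient—this is what makes the corollary's hypothesis tight against the theorem's. I would present this equivalence explicitly and then simply invoke Theorem~\ref{thm:fcsync}.
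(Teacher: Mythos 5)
Your proposal is correct and follows exactly the paper's route: the corollary is obtained as an immediate specialization of Theorem~\ref{thm:fcsync} with the constant function $g=\lambda x.M$, for which $g^*\!\left(\left\lceil \frac{2D}{P}\right\rceil\right)=1$ once $D\leq PM/2$ (equivalently $\left\lceil \frac{2D}{P}\right\rceil\leq M$ by integrality of $M$), yielding $q_0=1$ and the $3D$ stabilization bound. Your explicit verification of the equivalence and of the finiteness of $g^*$ is exactly the bookkeeping the paper leaves implicit.
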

Let us observe that Theorem~\ref{thm:fcsync} provides an upper-bound of three times the diameter~$D$ 
	on $\FP$'s stabilization time, which is independent on the bound $B$.

The limit of $PM/2$ in Corollary \ref{cor:FP} is tight, as proved  by the following $\FP$'s execution.
For simplicity, we assume that $P$ is even and $M$ is odd.
Let $n$ be an integer such that $ n > \frac{PM}{2}+1 $.
The communication graph is the static bidirectional chain $i_0, \cdots, i_{n-1}$.
The node $i_0$ starts with~0, whereas the other nodes start with $PM/2$.
We can prove that in round~$t$, there are exactly $\frac{PM}{2} + 1 - \left | \clmod{t}{PM} - \frac{PM}{2} \right |$ nodes 
	with a local clock equal to $\clmod{t}{PM}$, and all the other clocks are equal to 
	 $\clmod{t + \frac{PM}{2}}{PM}$. 
Since $M$ is odd, we have $t + \frac{PM}{2} \not\equiv_P t$. 
As it holds that 
	$$  1 \leq \frac{PM}{2} + 1 - \left | \clmod{t}{PM} - \frac{PM}{2} \right | \leq \frac{PM}{2} + 1  < n , $$
	both values are present, which shows that the system never synchronizes.
	
Interestingly, the self-stabilizing algorithm in~\cite{BPV:algorithmica:08}, called \emph{SS-MinSU} and developed 
	for  clock synchronization in a static and strongly connected network when a bound $B$ on the diameter\footnote{%
	The bound $B$  is  denoted~$\alpha$ in  the \emph{SS-MinSU} algorithm.}
	 is available, is actually  an optimization of the $\FP$ algorithm:
	the use of negative numbers in the interval $[-B, -1]$ allows for reducing the number of states to 
	$B +P$ instead of $ \left \lceil \frac{2 B}{P} \right \rceil P$ in $\FP$ (see Figure~\ref{fig:space}).
	
As for the algorithm proposed in~\cite{ADG:ppl:91} for a connected  bidirectional digraph~$G$, it corresponds 
	to the $SAP_{\lambda x.1}$  algorithm, combined with a round-robin strategy which consists, for each node, to send one 
	message per round according to this fixed cyclic order amongst the out-neighbors in $G$.
This strategy thus translates the  digraph~$G$ into a dynamic graph $\dG$.
Moreover, if $ i = k_1 , \dots , k_{m+1}=j$  is a path  in the digraph~$G$, then $(i,j)$ is an edge in any 
	digraph $\dG \left(t +1 : t+ d^-_{k_1} + \dots + d^-_{k_m}\right ) $, where $ d^-_{k}$ denotes $k$'s out-degree
	in~$G$.
In the case~$G$ is bidirectional, Proposition 24 in~\cite{CB:infcomp:22} shows that
	$$ d^-_{k_1} + \dots + d^-_{k_m} \leq 3 |V| $$
	if the path $k_1 , \dots , k_{m} $ is a geodesic in~$G$.
In other words, the dynamic graph~$\dG$ has a finite diameter which  is upper-bounded by $3|V|$.
Via Corollary~\ref{cor:FP}, the interpretation of the algorithm in~\cite{ADG:ppl:91} for a (fixed) bidirectional  digraph 
	in terms of a run of $SAP_{\lambda x.1}$ over a dynamic graph
	shows that this algorithm works when $P \geq 6 |V|$, and its stabilization time is less than $9 |V|$
	 (instead of the correctness condition $P \geq n^2$ and the stabilization bound of $\frac{3}{2} n^2$ 
	 given  both in~\cite{ADG:ppl:91}).

\vspace{0.2cm}
		
When the diameter of the dynamic graph is finite but  no bound is available,
	we may use the following corollary of Theorem~\ref{thm:fcsync}: 
\begin{cor}\label{cor:SAP}
For any non-decreasing and inflationary function~$g$,
	the $\SAP$ algorithm solves the mod-$P$ synchronization  problem in any dynamic graph 
	that is strongly connected with bounded delay.
\end{cor}

Our $\SAP$ algorithm is a variant of the algorithm presented by Boldi and Vigna~\cite{BV:dc:02}:
	both rely on the  idea of a self-adaptive period and  their time complexities are of
	the same order of magnitude.
The main discrepancy lies in space complexity: while the variables $C_i$ in $\SAP$ are of the order of
	$PM(q_0 \, D)$, the algorithm in~\cite{BV:dc:02} uses variables
	of the order of $PM(q_0 \, D)^2$,
	where $q_0 = g^*\left ( \left \lceil \frac{2D}{P} \right \rceil \right )$.

\section{Periodic  Clocks with Uniform Rootedness}

The aim of this section is to study how the assumption of strong connectivity with bounded delay
	(or equivalently of a finite diameter) can be relaxed 
	so that the $\SAP$ algorithm still achieves mod-$P$ synchronization.

\subsection{The \emph{SAP} algorithm with rootedness}\label{sec:cex}

We first demonstrate that, as opposed to the $\MM$  algorithm, 
	the sole assumption of a non-empty center is not sufficient for $\SAP$ to synchronize nodes.
Indeed, even with a central node~$i$, sporadic roots may disrupt the value of  $i$'s clock,
	and hence preclude any alignment of the other clocks on $C_i$.
This is  the idea underlying the scenario that we develop below:
Let $G, H_j, H_k, I$ be the four digraphs defined  in Figure~\ref{fig:rootedSAP} 
	with three nodes $ i,j,k $, and let $\Phi_k$ be the following predicate on the rounds
	of a $\SAP$ execution:
	$$ \big ( M_i = M_j \big ) \wedge \big (M_i \geq M_k \big )
	     \wedge \big ( C_i = C_j  \big ) \wedge  \big ( C_i\not\equiv_P 0 \big )   \wedge  \big ( C_i \leq PM_i -2 \big ) 
	    \wedge   \big ( C_k = 0 \big )  .$$
The predicate $\Phi_j$ is obtained by exchanging  $j$ and $k$.
Then we easily prove the following lemma:
\begin{lem}\label{lem:cex1}
Let $t$ be a round of a $\SAP$ execution  with a dynamic graph $\dG$ such that 
	$$\dG(t+1) = \cdots \dG(t +  PM - c -2) = G, \  
	 \dG(t +  PM - c -1) = H_k, \ \dG(t +  PM - c) = I .$$
If $\Phi_k$  holds at round $t$, then $\Phi_j$  holds at round $t + PM- c $,
	where $M = M_i (t)$ and $c= C_i(t)$.
Moreover, $M_i( t + PM- c ) = g^{PM -c -1} (M)$ and $C_i(t + PM- c ) = PM -c $.
\end{lem}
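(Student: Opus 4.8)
The plan is to verify the lemma by a direct round-by-round computation of the joint configuration $(C_i,M_i,C_j,M_j,C_k,M_k)$, reading off the incoming edges of each node from the three digraphs $G$, $H_k$, $I$ and applying the update rules of $\SAP$ (lines~\ref{line:min}--\ref{line:app_g}). According to the figure, during the $G$-phase the relevant incoming edges are $j\to i$ and $k\to i$ (together with all self-loops), so $i$ hears both families of clock values while $j$ and $k$ only see themselves. The whole argument rests on two elementary facts that persist throughout the scenario: first, since $c=C_i(t)\not\equiv_P0$, the ``low'' values (congruent to the elapsed time, carried by $i$ and $k$) are never congruent modulo $P$ to the ``high'' values (congruent to $-c$ plus the elapsed time, carried by $j$); second, by Lemma~\ref{lem:general} the variables $M_\bullet$ are non-decreasing, so a $\max$ over in-neighbours is always attained by the node carrying the largest current multiplier.

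First I would treat the long $G$-phase, rounds $t+1,\dots,t+PM-c-2$, by induction on the elapsed time $s$, maintaining the invariant
\[
C_i(t+s)=s,\quad C_k(t+s)=s,\quad C_j(t+s)=c+s,\quad M_i(t+s)=g^{s}(M),\quad M_j(t+s)=M,\quad M_k(t+s)=M_k(t).
\]
In the base step, $\Phi_k$ gives that at round $t+1$ the node $i$ hears the value $0$ from $k$ and $c$ from $j$, so Lemma~\ref{lem:Ci} yields $C_i(t+1)=1$, while $j$ and $k$ merely increment along their self-loops. The disagreement test fires at $i$ because $c\not\equiv_P0$, so line~\ref{line:app_g} applies $g$ once; as $M_i(t)=M\geq M_k(t)$ the $\max$ equals $M$ and $M_i(t+1)=g(M)$. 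The inductive step is identical: $i$ again hears the low value $s-1$ (from itself and from $k$) and the high value $c+s-1$ (from $j$), reincrementing to $s$ and applying $g$ once more, so $M_i$ accumulates one factor of $g$ per round; meanwhile $c+s\leq PM-2<PM$ keeps $C_j=c+s$ free of wrap-around. This reaches round $t+PM-c-2$ with $M_i=g^{PM-c-2}(M)$ and $C_i=C_k=PM-c-2$, the two now being clock-aligned.

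Next I would compute the two special rounds explicitly. At the $H_k$-round $t+PM-c-1$ the only non-trivial incoming edge is the one feeding $i$'s large multiplier into $k$: the node $k$ now hears only values congruent to the elapsed time, so its disagreement test does \emph{not} fire, and $k$ is raised to $C_k=PM-c-1$, $M_k=g^{PM-c-2}(M)$, while $i$ simply increments to $C_i=PM-c-1$ with $M_i$ unchanged. This single absent $g$-application is exactly what produces the exponent $PM-c-1$ rather than $PM-c$ in the final claim, and is the step I expect to be easiest to misstate. Finally, at the $I$-round $t+PM-c$ both $i$ and $k$ hear the disagreeing high value $PM-1\equiv_P-1$ carried by $j$ alongside the low value $PM-c-1$, so each increments to $PM-c$ (the modulus $PM_\bullet\geq PM$ precludes wrap-around) and applies $g$ once more; by non-decreasingness the $\max$ picks up $g^{PM-c-2}(M)$, giving $M_i=M_k=g^{PM-c-1}(M)$. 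Simultaneously $j$ increments along its self-loop from $PM-1$ to $[\,PM\,]_{PM}=0$, the modular wrap-around that resets $j$ without ever triggering $g$, so $M_j=M$.

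It then remains to read $\Phi_j$ off this final configuration: $C_i=C_k=PM-c$ and $M_i=M_k$ give the equalities with $k$; $M_j=M\leq g^{PM-c-1}(M)=M_i$ gives $M_i\geq M_j$; $C_j=0$ is the mirror of $C_k=0$; $PM-c\equiv_P-c\not\equiv_P0$ gives $C_i\not\equiv_P0$; and $C_i=PM-c\leq PM_i-2$ follows since $M_i=g^{PM-c-1}(M)$ has strictly grown (at least one inflationary step, as $PM-c-1\geq1$). The two displayed identities $M_i(t+PM-c)=g^{PM-c-1}(M)$ and $C_i(t+PM-c)=PM-c$ are recorded along the way. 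The main obstacle, and the only genuinely delicate bookkeeping, is ensuring that the disagreement predicate of line~\ref{line:app_g} fires in precisely the intended rounds—every $G$-round and the final $I$-round, but not the $H_k$-round—since the exact count of $g$-iterations, hence the stated exponent, hinges on this pattern together with the hypothesis $c\not\equiv_P0$.
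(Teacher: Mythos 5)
Your overall strategy—a direct round-by-round computation of the configuration, tracking exactly when line~\ref{line:app_g} fires—is the right one (the paper itself leaves this verification to the reader), but your execution rests on a misreading of Figure~\ref{fig:rootedSAP}, and this error propagates through the whole argument. In the digraph $G$ the non-loop edges are $i\to j$ and $i\to k$, not $j\to i$ and $k\to i$: the node $i$ hears \emph{only itself} during the $G$-phase, while $j$ and $k$ each hear $i$. (This is forced by the surrounding text: the paper asserts that the resulting dynamic graph is rooted with $i$ as its unique center, which is impossible under your orientation.) Consequently your invariant for the $G$-phase is false. The correct one is $C_i(t+s)=C_j(t+s)=c+s$, $M_i(t+s)=M_j(t+s)=M$, $C_k(t+s)=s$, $M_k(t+s)=g^{s}(M)$: it is the lagging node $k$, which hears both its own low value and $i$'s high value, that accumulates one application of $g$ per round; $i$'s clock is not pulled down and $M_i$ stays at $M$ throughout this phase.

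The two special rounds are then also wrong. In the $H_k$-round, $i$ has in-neighbors $\{i,k\}$ (the edge $i\leftrightarrow k$ is bidirectional), so $i$ hears the disagreeing values $PM-2$ and $PM-c-2$; the test of line~\ref{line:app_g} \emph{does} fire at both $i$ and $k$, the sporadic root $k$ drags $i$'s clock down to $C_i=PM-c-1$ via the min in line~\ref{line:min}, and $M_i$ jumps to $g\bigl(g^{PM-c-2}(M)\bigr)=g^{PM-c-1}(M)$. This round, not the last one, supplies the final $g$-application. In the $I$-round nothing can be heard from $j$ at all—$I$ consists of three self-loops and no other edges—so your claim that ``both $i$ and $k$ hear the disagreeing high value $PM-1$ carried by $j$'' describes a transition that cannot occur; in reality every node just increments along its self-loop, $j$ wraps from $PM-1$ to $0$, and no $g$ fires. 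Your final configuration happens to coincide numerically with the lemma's conclusion only because you compensated one missing $g$-application (the one you suppressed in the $H_k$-round) with a spurious one in the $I$-round, justified by edges that do not exist. Since the lemma is precisely a bookkeeping statement about where and how often $g$ is applied, this cancellation of errors does not constitute a proof; the trace must be redone with the correct orientations, after which the stated conclusions ($\Phi_j$ at round $t+PM-c$, $M_i=g^{PM-c-1}(M)$, $C_i=PM-c$) do follow.
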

We now fix two positive integers $M^0$ and $c^0$ such that 
	$ c^0 \in \{ 1, \cdots, PM^0 -2\}$  and $ c^0 \not\equiv_P 0 $,
	and we consider the two sequences $(M^r)_{r\geq 0}$ and $(c^r)_{r\geq 0}$ 
	satisfying
	$$\left \{ \begin{array}{l}
		M^{r+1} = g^{PM^r -c^r -1} (M^r)\\
		c^{r+1} = PM^r - c^r. 
		\end{array} \right. $$
We let $M^{-1} = 0$.
The dynamic graph $\dG$, defined as:
	$$ \dG(PM^{r-1} +1) = \cdots = \dG(PM^r - c^r -2) = G, \ 
	    \dG(PM^r - c^r -1) = H_k \mbox{ or } H_j , \ 
	    \dG(PM^r - c^r -1) = I ,$$
	    is rooted with delay two and $i$ is its unique center.
Lemma~\ref{lem:cex1} shows that $\Phi_k$ holds infinitely often in the $\SAP$ execution	with 
	the dynamic graph~$\dG$ and starting with:
	$$ M_i(0) = M_j(0) = M_k(0) = M^0\!,\   C_i(0) = C_j(0)  = c^0, \ \mbox{ and } \  C_k(0) = 0 ,$$
	which proves that the nodes are never synchronized.

\begin{figure}
    \begin{subfigure}[b]{0.22\textwidth}
    \centering
        \resizebox{\linewidth}{!}{
			\begin{tikzpicture}[roundnode/.style={circle, draw=green!60, fill=green!5, very thick, minimum size=7mm}]
				\node[roundnode]        (first)        {$i$};
				\node[roundnode]        (second)       [right=of first] {$j$};
				\node[roundnode]        (third)       [below=of first] {$k$};

				\draw[-stealth] (first.east) -- (second.west);
				\draw[-stealth] (first.south) -- (third.north);
				\draw[-stealth] (first.north) .. controls +(north:9mm) and +(west:9mm) .. (first.west);
				\draw[-stealth] (second.north) .. controls +(north:9mm) and +(right:9mm) .. (second.east);
				\draw[-stealth] (third.south) .. controls +(south:9mm) and +(west:9mm) .. (third.west);
            \end{tikzpicture}
        }
        \caption{digraph $G$}   
    \end{subfigure}
     \begin{subfigure}[b]{0.22\textwidth}
        \centering
        \resizebox{\linewidth}{!}{
			\begin{tikzpicture}[roundnode/.style={circle, draw=green!60, fill=green!5, very thick, minimum size=7mm}]
				\node[roundnode]        (first)        {$i$};
				\node[roundnode]        (second)       [right=of first] {$j$};
				\node[roundnode]        (third)       [below=of first] {$k$};

				\draw[stealth-stealth] (first.east) -- (second.west);
				\draw[-stealth] (first.south) -- (third.north);
				\draw[-stealth] (first.north) .. controls +(north:9mm) and +(west:9mm) .. (first.west);
				\draw[-stealth] (second.north) .. controls +(north:9mm) and +(right:9mm) .. (second.east);
				\draw[-stealth] (third.south) .. controls +(south:9mm) and +(west:9mm) .. (third.west);
            \end{tikzpicture}
        }
        \caption{digraph $H_j$}
        \label{fig:gr2}
    \end{subfigure}
    \begin{subfigure}[b]{0.22\textwidth}
        \centering
        \resizebox{\linewidth}{!}{
			\begin{tikzpicture}[roundnode/.style={circle, draw=green!60, fill=green!5, very thick, minimum size=7mm}]
				\node[roundnode]        (first)        {$i$};
				\node[roundnode]        (second)       [right=of first] {$j$};
				\node[roundnode]        (third)       [below=of first] {$k$};

				\draw[-stealth] (first.east) -- (second.west);
				\draw[stealth-stealth] (first.south) -- (third.north);
				\draw[-stealth] (first.north) .. controls +(north:9mm) and +(west:9mm) .. (first.west);
				\draw[-stealth] (second.north) .. controls +(north:9mm) and +(right:9mm) .. (second.east);
				\draw[-stealth] (third.south) .. controls +(south:9mm) and +(west:9mm) .. (third.west);
            \end{tikzpicture}
        }
        \caption{digraph $H_k$}
        \label{fig:gr3}
    \end{subfigure}
    \begin{subfigure}[b]{0.22\textwidth}
        \centering
        \resizebox{\linewidth}{!}{
			\begin{tikzpicture}[roundnode/.style={circle, draw=green!60, fill=green!5, very thick, minimum size=7mm}]
				\node[roundnode]        (first)        {$i$};
				\node[roundnode]        (second)       [right=of first] {$j$};
				\node[roundnode]        (third)       [below=of first] {$k$};

				\draw[-stealth] (first.north) .. controls +(north:9mm) and +(west:9mm) .. (first.west);
				\draw[-stealth] (second.north) .. controls +(north:9mm) and +(right:9mm) .. (second.east);
				\draw[-stealth] (third.south) .. controls +(south:9mm) and +(west:9mm) .. (third.west);
            \end{tikzpicture}
        }
        \caption{digraph $I$}
        \label{fig:gr1}
    \end{subfigure}
	\caption{Four digraphs with three nodes.}\label{fig:rootedSAP}

\end{figure}
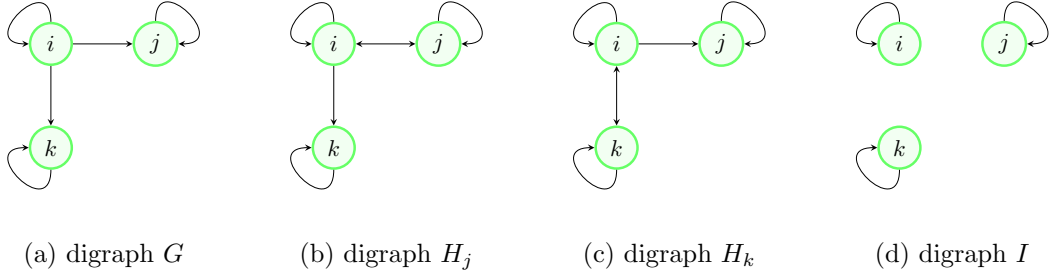

That leads us to consider the stronger assumption of \emph{uniform} rootedness.
However, the simple but typical scenario below shows that
	the correctness proof of $\SAP$ that we have developed in the previous section
	cannot be directly extended to dynamic graphs that are uniformly rooted with bounded delay:
	the $\FP$ algorithm, 
	does not achieve mod-$P$ synchronization in the execution  with the initial values 
	$C_i(0) = C_j(0) = 1$ and $C_k(0) = 0$ and the fixed graph $H$ defined in Figure~\ref{fig:cexFP},
	even for large value of $M$.
Indeed, at each round $t$, it holds that $C_i(t) =\clmod{t+1}{PM}$, $C_k(t) =\clmod{t}{PM}$, and
$$C_j(t) = 
\begin{cases}
	1 & \text{if}~\clmod{t}{PM} = 0 \\
	\clmod{t}{PM} & \text{otherwise.}
\end{cases}$$

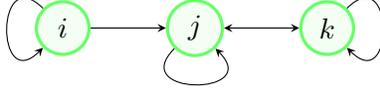
\begin{figure}[htb]
	\centering
	\begin{tikzpicture}[roundnode/.style={circle, draw=green!60, fill=green!5, very thick, minimum size=7mm}]
		\node[roundnode]        (first)        {$i$};
		\node[roundnode]        (second)       [right=of first] {$j$};
		\node[roundnode]        (third)       [right=of second] {$k$};

		\draw[stealth-stealth] (second.east) -- (third.west);
		\draw[-stealth] (first.east) -- (second.west);
		\draw[-stealth] (first.north west) .. controls +(north west:9mm) and +(south west:9mm) .. (first.south west);
		\draw[-stealth] (third.north east) .. controls +(north east:9mm) and +(south east:9mm) .. (third.south east);
		\draw[-stealth] (second.south west) .. controls +(south west:9mm) and +(south east:9mm) .. (second.south east);
	\end{tikzpicture}
	\caption{The  digraph $H$ with three nodes. }\label{fig:cexFP}
\end{figure}

\subsection{The  \emph{SAP} algorithm and uniform rootedness with bounded delay}
	 
The striking point of increasing periods is precisely to overcome the above-mentioned limitation:
	we are going to prove that the $\SAP$ algorithm achieves mod\,-$P$ synchronization 
	in the case of  uniform rootedness with bounded delay under the sole condition
	of a non-decreasing and strictly inflationary function~$g$.
In other words, while Corollary~\ref{cor:FP} has no counterpart for dynamic graphs
	that are uniformly rooted with bounded delay, we will show that 
	Corollary~\ref{cor:SAP} extends to  this latter class of dynamic graphs,
	with a synchronization phenomena quite different from that involved in the case 
	of strong connectivity.

We fix a dynamic graph~$\dG$ that, ultimately,  is uniformly rooted with bounded delay,
	 and an execution~$\sigma$ of $\SAP$ with $\dG$.
Without loss of generality, we may assume that $\dG$ is
	 uniformly rooted with bounded delay from the beginning, and we  let $Z = \Z(\dG)$ and $ R =\rad(\dG)$.
	
The nodes in $Z$ receive no message from the nodes in  $V\setminus Z$.
From the viewpoint of every node in~$Z$, the execution~$\sigma$ is thus indistinguishable
	from an execution with the set of nodes equal to~$Z$ and a dynamic graph
	that is strongly connected with bounded delay.
Theorem~\ref{thm:fcsync} shows that mod~$\!P$-synchronization is eventually achieved in~$Z$.
A closer look at its proof yields the following more precise result:
	there exist two non-negative integers $s$ and $\MZ$ such that 
	\begin{equation}\label{eq:Zsynchro}
	 \forall t\geq s, \ \forall k,\ell \in Z :  \  C_k(t) = C_{\ell}(t) \ \mbox{ and } \  M_k(t) = \MZ .
	 \end{equation}
The minimum integer $s$ satisfying Eq.~(\ref{eq:Zsynchro}) is denoted by $t_0$, and 
	$C(t)$ is the common value of all the  counters $C_k$(t) for $k \in Z$ and $t\geq t_0$.
The node~$i$ is said to be \emph{$Z$-synchronized at round}~$t$ if
	$$  C_i(t) \equiv_{P} C(t) .$$
The set of $Z$-synchronized nodes at round $t$ is denoted by~$S_Z(t)$.
In the case the system is not synchronized in round~$t$, i.e.,  $S_Z(t) \neq V$,  we let
	$$ \tilde{M} (t) \eqdef  \min_{i\notin S_Z(t)}  M_i(t) .$$

Using the existence of a self-loop at each node and the update rules of the variables $M_i$, 
	we easily show that~$\tilde{M}$ is non-decreasing:
	
\begin{lem} \label{lem:Mtilde}
For all $t\geq t_0$, it holds that $ \tilde{M} (t + 1) \geq \tilde{M} (t) $.
\end{lem}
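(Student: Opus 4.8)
The plan is to prove the inequality pointwise: I will show that every node that fails to be $Z$-synchronized at round $t+1$ must already carry an $M$-value of at least $\tilde{M}(t)$, so that minimizing over all such nodes yields $\tilde{M}(t+1)\geq \tilde{M}(t)$. First I would dispose of the degenerate case. If $S_Z(t+1)=V$ the system is synchronized at round $t+1$ and $\tilde{M}(t+1)$ is not even defined, so there is nothing to prove; conversely, since full $Z$-synchronization is exactly synchronization and the latter persists by Lemma~\ref{lem:general}, $S_Z(t+1)\neq V$ forces $S_Z(t)\neq V$. Hence in the only relevant case both $\tilde{M}(t)$ and $\tilde{M}(t+1)$ are well defined.

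The one auxiliary fact I would isolate beforehand is that the center's common clock advances by one modulo $P$ each round: for $t\geq t_0$ we have $C(t+1)\equiv_P 1+C(t)$. This is because $Z$ receives no message from $V\setminus Z$ (as noted before Eq.~(\ref{eq:Zsynchro})), all its nodes share the value $C(t)$ and the modulus $P\MZ$ from Eq.~(\ref{eq:Zsynchro}), so line~\ref{line:min} gives $C_k(t+1)=\clmod{1+C(t)}{P\MZ}$ for every $k\in Z$.

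Now fix any node $i\notin S_Z(t+1)$ and split on whether $i$ has an incoming neighbor in $\dG(t+1)$ that is not $Z$-synchronized at round $t$. If some $j\in\In_i(t+1)\setminus S_Z(t)$ exists, then $M_i(t+1)$ is obtained from the maximum of the received $M$-values followed by a possible application of the inflationary $g$, so $M_i(t+1)\geq M_j(t)\geq \tilde{M}(t)$ and we are done for this $i$. Otherwise every in-neighbor of $i$ lies in $S_Z(t)$, and I claim this forces $i\in S_Z(t+1)$, contradicting the choice of $i$. The contradiction comes from Lemma~\ref{lem:Ci}: in its first alternative $C_i(t+1)=1+C_{\nmin{i}{t+1}}(t)$ with $\nmin{i}{t+1}$ an in-neighbor, hence $Z$-synchronized, so $C_i(t+1)\equiv_P 1+C(t)\equiv_P C(t+1)$.

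The delicate point, which I expect to be the main obstacle, is the modular wrap-around in the second alternative of Lemma~\ref{lem:Ci}, where $C_i(t+1)=0$ and $C_i(t)=PM_i(t)-1$. Here I would exploit the self-loop: $i\in\In_i(t+1)$ is itself $Z$-synchronized by assumption, so $C(t)\equiv_P C_i(t)=PM_i(t)-1\equiv_P -1$, whence $C(t+1)\equiv_P 0=C_i(t+1)$ and again $i\in S_Z(t+1)$. Thus the second alternative can never occur for $i\notin S_Z(t+1)$, the first case of the dichotomy always applies, and minimizing $M_i(t+1)\geq\tilde{M}(t)$ over all $i\notin S_Z(t+1)$ gives $\tilde{M}(t+1)\geq\tilde{M}(t)$, as required.
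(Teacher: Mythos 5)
Your proof is correct and uses exactly the ingredients the paper invokes for this lemma (whose proof it leaves as an easy exercise): the self-loop at each node and the update rules for the $M_i$, together with the isolation of $Z$ to get $C(t+1)\equiv_P 1+C(t)$. In particular, your treatment of the wrap-around case of Lemma~\ref{lem:Ci} via the self-loop is precisely the point the paper's hint alludes to, so this is the intended argument, carried out completely.
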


Let $i$ be a central node such that $ \e_{\dG}(i) =R$, and let $j$ be an arbitrary node.
There exists an edge $(i,j)$ in each digraph $\mathds{G}(t :t+R-1)$.
Since $C_i(t) < PM$, the second claim in Lemma~\ref{lem:general} implies the following upper bound 
	on the clock~$C_j$:
	
\begin{lem}\label{lem:Cuppbound}
For all $t\geq t_0 +R $ and all nodes $j \in V$, it holds that $ C_j (t) <  P\MZ + R $.
\end{lem}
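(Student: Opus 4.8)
The plan is to follow the outline already sketched just before the statement: bound $C_j(t)$ by tracing it back along a short temporal path to the fixed central node~$i$, and then exploit the fact that~$i$'s clock is confined below $P\MZ$ once the center~$Z$ has settled.

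First I would record two facts about the central node~$i$ with $\e_{\dG}(i) = R$. Having finite eccentricity, $i$ is central, so $i \in Z = \Z(\dG)$; hence by Eq.~(\ref{eq:Zsynchro}), for every $s \geq t_0$ we have $M_i(s) = \MZ$. Because line~\ref{line:min} always produces $C_i(s) \in \{0, \dots, PM_i(s) - 1\}$, this yields the uniform bound $C_i(s) \leq P\MZ - 1 < P\MZ$ for all $s \geq t_0$.

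Next I would use the definition of eccentricity to locate the path. For any $t \geq R$, taking the start index to be $t - R + 1$, the equality $\e_{\dG}(i) = R$ guarantees an edge $(i,j)$ in the product digraph $\dG(t - R + 1 : t)$. Applying the second claim of Lemma~\ref{lem:general} to this edge gives
\[
C_j(t) \leq C_i(t - R) + R .
\]
Finally, I would combine the two steps: for $t \geq t_0 + R$ we have $t - R \geq t_0$, so the uniform bound applies at round $t - R$, giving $C_i(t-R) < P\MZ$ and therefore $C_j(t) < P\MZ + R$, as claimed.

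I do not expect a genuine obstacle here, since the text essentially hands over the argument. The only points demanding care are the index bookkeeping in the application of Lemma~\ref{lem:general} (so that the additive term comes out exactly equal to~$R$) and verifying that the hypothesis $t \geq t_0 + R$ is precisely what makes both requirements simultaneously hold, namely $t - R + 1 \geq 1$ for the path to exist and $t - R \geq t_0$ for $i$'s clock to already be capped by $P\MZ$.
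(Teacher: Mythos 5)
Your proof is correct and is essentially the paper's own argument: the paper likewise fixes a central node $i$ with $\e_{\dG}(i)=R$, observes that $C_i < P\MZ$ once the center has stabilized (i.e., after round $t_0$), and applies the second claim of Lemma~\ref{lem:general} to the edge $(i,j)$ in $\dG(t-R+1:t)$ guaranteed by the definition of eccentricity. You merely make explicit the index bookkeeping and the monotonicity of $M_i$ that the paper leaves implicit.
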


Then  Lemma~\ref{lem:croissantsc} admits the following  counterpart in the case of uniform rootedness.

\begin{lem} \label{lem:Mlowbound}
For every positive integer $q$, 
	one of the following statements is true:
	\begin{enumerate}
		\item the system is synchronized in round $t_0 + q R$; 
		\item $\tilde{M}(t_0 + q R) \geq g^{q-1}(\MZ)$.
	\end{enumerate}
\end{lem}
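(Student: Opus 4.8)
The plan is to prove \cref{lem:Mlowbound} by induction on $q$, mirroring the structure of \cref{lem:croissantsc} but using the radius $R$ (rather than the diameter $D$) as the time increment and tracking the quantity $\tilde{M}$ on the \emph{non}-$Z$-synchronized nodes. The base case $q=1$ should be essentially trivial: either the system is already synchronized in round $t_0 + R$, or it is not, in which case $S_Z(t_0+R) \neq V$ and $\tilde{M}(t_0+R) \geq 0 = g^0(\MZ)$ holds automatically since $\tilde M$ is a minimum of nonnegative integers. The real content is the inductive step.

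For the inductive step, suppose the statement holds for $q$ and consider round $t_0 + (q+1)R$. If the system is synchronized in round $t_0 + qR$, then by the first claim of \cref{lem:general} it remains synchronized in round $t_0+(q+1)R$ and we are done. Otherwise, by the induction hypothesis $\tilde M(t_0 + qR) \geq g^{q-1}(\MZ)$. If the system happens to become synchronized in round $t_0 + (q+1)R$ we are again done, so assume not: then there is a node $j \notin S_Z(t_0+(q+1)R)$ realizing $\tilde M(t_0+(q+1)R) = M_j(t_0+(q+1)R)$. The goal is to exhibit a node $i$ and a time-interval edge into $j$ that forces $g$ to be applied, so that via \cref{lem:path} we gain one more iterate of $g$.

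The crucial geometric ingredient is that there is a central node $i^\star$ with $\e_{\dG}(i^\star) = R$, hence an edge $(i^\star, j)$ in $\dG(t_0 + qR + 1 : t_0 + (q+1)R)$ for the node $j$ above. The hard part is the case analysis on whether this edge is realized by a synchronized path: I expect to argue that since the nodes in $Z$ are mod-$P$ synchronized from $t_0$ onward with common value $C(t)$, if the edge $(i^\star,j)$ corresponded to a fully synchronized path then \cref{lem:path}(1) would give $C_j(t_0+(q+1)R) \equiv_P C(t_0+qR) + R \equiv_P C(t_0+(q+1)R)$, making $j$ $Z$-synchronized --- contradicting $j \notin S_Z$. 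Therefore \cref{lem:path}(2) must hold, yielding
\begin{equation*}
M_j(t_0+(q+1)R) \geq g\big(M_{i^\star}(t_0+qR)\big).
\end{equation*}
The remaining obstacle, and the one I would flag as the main difficulty, is lower-bounding $M_{i^\star}(t_0+qR)$ by $g^{q-1}(\MZ)$: the node $i^\star$ need not itself be the minimizer of $\tilde M$, and if $i^\star \in Z$ then $M_{i^\star} = \MZ$ while if $i^\star \notin S_Z$ then $M_{i^\star} \geq \tilde M(t_0+qR) \geq g^{q-1}(\MZ)$. One must reconcile these by using that $g$ is non-decreasing together with $\MZ \geq g^{q-1}(\MZ)$ being \emph{false} in general --- so I would instead carry the invariant that $i^\star$ can be chosen with $M_{i^\star}(t_0+qR) \geq g^{q-1}(\MZ)$, either by noting $\MZ$ dominates through the synchronized core or by routing the incoming edge through a non-synchronized node; chaining with monotonicity of $g$ then gives $M_j(t_0+(q+1)R) \geq g^q(\MZ)$, which is exactly $\tilde M(t_0+(q+1)R) \geq g^{(q+1)-1}(\MZ)$ as required.
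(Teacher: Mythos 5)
There are two genuine gaps in your proposal. The first is the base case: $g^0$ is the \emph{identity} function, so for $q=1$ the claim is $\tilde{M}(t_0+R) \geq g^{0}(\MZ) = \MZ$, not $\geq 0$. This does require an argument: since $\rad(\dG)=R$ and, under uniform rootedness, the center of $\dG$ is exactly $Z$ (it has no incoming edges), every node $j$ has an in-neighbor $z\in Z$ in $\dG(t_0+1:t_0+R)$, and the max-update rule of line~\ref{line:max} propagates $M$-values along the corresponding path, giving $M_j(t_0+R)\geq M_z(t_0)=\MZ$ for \emph{every} node $j$; hence $\tilde{M}(t_0+R)\geq \MZ$.

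The second gap is fatal to your inductive step as designed: applying \cref{lem:path} to the edge $(i^\star,j)$ can never yield more than one application of $g$. Indeed, every central node lies in $Z$, so $M_{i^\star}(t_0+qR)=\MZ$ for all $q$, and disjunct~(2) of \cref{lem:path} — which is phrased in terms of the $M$-value of the path's \emph{origin} — gives only $M_j(t_0+(q+1)R)\geq g(\MZ)$, independently of $q$. The invariant you propose to ``carry'' ($i^\star$ chosen with $M_{i^\star}(t_0+qR)\geq g^{q-1}(\MZ)$) is unachievable: the only nodes guaranteed to reach $j$ within every $R$-round window are central ones, and those are precisely the nodes whose $M$-value is frozen at $\MZ$. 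Your parenthetical alternative (``routing the incoming edge through a non-synchronized node'') is the right instinct, but it is exactly the part that needs a proof, and it is where the paper's argument departs from \cref{lem:path} as a black box. The paper takes the path $j_0,\dots,j_R$ from $j_0\in Z$ to the non-synchronized node $j_R=i$ and locates the \emph{first} index $d$ with $j_{d-1}\in S_Z(t_0+qR+d-1)$ but $j_d\notin S_Z(t_0+qR+d)$. By \cref{lem:Ci}, $j_d$ computes its clock from its min-in-neighbor $j^+ = \nmin{(j_d)}{t_0+qR+d}$, and since the resulting clock is not $Z$-synchronized, $j^+$ itself is not $Z$-synchronized at round $t_0+qR+d-1$; hence $M_{j^+}(t_0+qR+d-1)\geq\tilde{M}(t_0+qR+d-1)\geq\tilde{M}(t_0+qR)\geq g^{q-1}(\MZ)$ by \cref{lem:Mtilde} and the induction hypothesis. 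Because $j_d$ hears the two non-congruent values of $j_{d-1}$ and $j^+$, it executes line~\ref{line:app_g} on a value at least $M_{j^+}$, so $M_{j_d}(t_0+qR+d)\geq g^{q}(\MZ)$, and the max rule carries this bound along the rest of the path to $i$. In short, the $g$-amplification must be anchored at a non-$Z$-synchronized node, where the inductive bound on $\tilde{M}$ applies — not at the central node, where $M$ never grows.
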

\begin{proof}
We proceed by induction on $q$.
The base case $q=1$ is due to the update rule for~$M_i$.
For the inductive step, assume that the lemma holds in round $t_0 + q R$ and
	that some node~$i$ is not $Z$-synchronized in round~$ t_0 + (q + 1) R $.
Since $ R = \rad(\dG)$, the node $i$ has an in-neighbor in~$Z$
	in the directed graph~$\dG(t_0 + q R+1 : t_0 + (q + 1)R)$, i.e., there exist a node $j\in Z$ and a path
	$ j= j_0, j_1, \cdots ,  j_R =i$  in the round interval $ [ t_0 + q R+1 , t_0 + (q + 1) R ] $. 
Since $j\in Z \subseteq S_Z(t_0 + q  R)$ and $i \notin S_Z(t_0 + (q + 1) R )$, this path is not synchronized.
Let~$d\in\{1,\dots,R\}$ be the first  index  such that $j_{d -1}\in S_Z(t_0 + q R + d -1)$ and $j_{d} \notin S_Z(t_0 + q R + d)$.
	Then $j_{d-1}$ and $\nmin{(j_d)}{t_0+qR+d}$ (denoted $j^+$ for short) are two in-neighbors of $j_d$ whose clocks are not congruent 
	modulo~$P$ in round $t_0+qR+d-1$.
It follows that:
		\begin{equation*}
		M_i(t_0 + (q + 1) R ) \geq M_{j_{d}} (t_0 + q  R + d ) \geq g( M_{j^+} (t_0 + q  R +d -1)) \geq g(\tilde{M}( t_0 +  q R + d-1))  
		\geq g^{q }(\MZ).
	\end{equation*}
	The first two inequalities 	are due to the update rules for $M_i$ and $M_{j_{d}}$, 
	the third one is by definition of $\tilde{M}$ and the fact that $g$ is non-decreasing,
	and the last one is a consequence of  the inductive assumption and Lemma~\ref{lem:Mtilde}.
\end{proof}

 \begin{thm} \label{thm:general}
For any non-decreasing and inflationary function~$g$,
	the $\SAP$ algorithm solves the mod-$P$ synchronization  problem in any dynamic graph 
	that, ultimately, is uniformly rooted with bounded delay.
\end{thm}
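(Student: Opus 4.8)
The plan is to argue by contradiction, combining the three lemmas just established with the following key observation: once the periods have grown large enough to suppress all wrap-arounds, the smallest clock value carried by the still-unsynchronized nodes is \emph{forced to increase} by at least one each round, which is impossible since all clocks stay uniformly bounded by Lemma~\ref{lem:Cuppbound}.

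First I would record the stabilized data from Eq.~(\ref{eq:Zsynchro}): the round $t_0$, the common period value $\MZ$, and the center value $C(t)$. Since $Z=\Z(\dG)$ has no incoming edge in any $\dG(t)$ and all its clocks coincide with $C(t)$ for $t\geq t_0$, the update rule of line~\ref{line:min} gives $C(t)\equiv_P C(t-1)+1$. Hence a node $i$ is $Z$-synchronized at round $t$ exactly when $C_i(t)\equiv_P C(t)$, and this residue advances by one each round in lockstep with any synchronized neighbour. Now assume, for contradiction, that the system never synchronizes; then $S_Z(t)\neq V$ for every $t$, so $\tilde{M}(t)$ is defined throughout, it is non-decreasing by Lemma~\ref{lem:Mtilde}, and Lemma~\ref{lem:Mlowbound} together with the inflationarity of $g$ (whence $g^{q-1}(\MZ)\to\infty$) forces $\tilde{M}(t)\to\infty$. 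Consequently there is a round $T\geq t_0+R+1$ beyond which every non-$Z$-synchronized node $i$ satisfies $PM_i(t)\geq P\MZ+R+1$, which by Lemma~\ref{lem:Cuppbound} strictly exceeds $1+\min_{k}C_k(t-1)$. Thus, from round $T$ onward, the clock of any unsynchronized node does not wrap: $C_i(t)=1+\min_{j\in\In_i(t)}C_j(t-1)$.

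The crux is then a monotonicity argument on $\mu(t)\eqdef\min\{\,C_i(t)\mid i\notin S_Z(t)\,\}$. Fix $t>T$ and a non-synchronized node $i$ realizing $\mu(t)$; since $i$ does not wrap, $C_i(t)=1+C_{j^{*}}(t-1)$ where $j^{*}$ is the in-neighbour attaining the minimum. If $j^{*}$ were $Z$-synchronized at round $t-1$, then $C_i(t)\equiv_P 1+C(t-1)\equiv_P C(t)$, making $i$ synchronized — a contradiction. So $j^{*}$ is itself unsynchronized, whence $C_{j^{*}}(t-1)\geq \mu(t-1)$ and therefore $\mu(t)\geq 1+\mu(t-1)$. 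Iterating from $T$ yields $\mu(T+m)\geq \mu(T)+m\to\infty$, contradicting the uniform bound $\mu(t)<P\MZ+R$ of Lemma~\ref{lem:Cuppbound}. Hence the system must be synchronized at some round, and the first claim of Lemma~\ref{lem:general} guarantees it remains synchronized thereafter.

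The delicate point — and the main obstacle — is that one must arrange for the unsynchronized nodes to stop wrapping while \emph{tolerating} that already-synchronized nodes, which may retain small periods, keep wrapping to small values. The argument therefore cannot track clock values alone; it must track residues modulo $P$, and the genuine insight is that an unsynchronized node holding the minimum value can never draw that minimum from a synchronized neighbour without itself inheriting the center's residue and becoming synchronized. It is precisely this dichotomy that pumps $\mu$ upward and drives the contradiction, and it explains the remark that the synchronization phenomenon here is qualitatively different from the propagation-of-the-global-minimum mechanism used in the strongly connected case.
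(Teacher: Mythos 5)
Your overall route is the same as the paper's: assume the system never synchronizes, use Lemmas~\ref{lem:Mtilde} and~\ref{lem:Mlowbound} together with the inflationarity of $g$ to force the periods of unsynchronized nodes to grow, rule out wrap-arounds for unsynchronized nodes, and then pump the minimum unsynchronized clock up by one per round --- your recurrence $\mu(t)\geq 1+\mu(t-1)$ is exactly the paper's inductive invariant $C_i(t)\geq t-t_1$ for $i\notin S_Z(t)$ --- until it contradicts the uniform bound of Lemma~\ref{lem:Cuppbound} (the paper works with explicit rounds $t_1$ and $t_2=t_1+P\MZ+R$, which is what yields its stabilization-time bound; your version is purely asymptotic). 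However, your no-wrap step has a genuine gap. You claim that for $t\geq T$ every $i\notin S_Z(t)$ satisfies $PM_i(t)\geq P\MZ+R+1>1+\min_k C_k(t-1)$ and ``thus'' does not wrap at round $t$. But the modulus node $i$ reduces by at round $t$ is $PM_i(t-1)$, not $PM_i(t)$ (line~\ref{line:min} executes before line~\ref{line:max}), and a lower bound on the non-decreasing quantity $M_i(t)$ implies nothing about $M_i(t-1)$. More substantively, the bound $M_i(t-1)\geq\tilde{M}(t-1)$ is available only when $i\notin S_Z(t-1)$, whereas your node is only known to lie outside $S_Z(t)$; membership in $S_Z$ can be lost from one round to the next (that is precisely the pathology of the uniformly rooted setting), and a node in $S_Z(t-1)\setminus S_Z(t)$ may carry an arbitrarily small period at round $t-1$ --- you yourself observe that synchronized nodes ``may retain small periods'' and keep wrapping. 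As written, nothing prevents the node realizing $\mu(t)$ from being such a node that has just wrapped to $0$, which would destroy the monotonicity of $\mu$.

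The conclusion is nevertheless true, and the missing observation is exactly how the paper disposes of its case (b): a wrap never changes a residue modulo $P$. Since $PM_i(t-1)\equiv_P 0$, one always has $C_i(t)\equiv_P 1+C_{\nmin{i}{t}}(t-1)$, wrap or no wrap; and in the wrap case Lemma~\ref{lem:Ci} gives $\nmin{i}{t}=i$, hence $C_i(t)\equiv_P 1+C_i(t-1)$. So a node in $S_Z(t-1)$ that wraps at round $t$ is still in $S_Z(t)$; contrapositively, a node outside $S_Z(t)$ that wrapped must already have been outside $S_Z(t-1)$, and only then does your period bound apply (at round $t-1\geq T$), giving $C_i(t-1)=PM_i(t-1)-1\geq P\MZ+R$ and contradicting Lemma~\ref{lem:Cuppbound}. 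With this one observation inserted, your argument goes through and coincides, modulo the asymptotic phrasing, with the paper's proof.
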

\begin{proof}
We let $q_1 = g^*\left ( \left \lceil \MZ + \frac{R+1}{P} \right \rceil \right ) $
	and  $t_1 = t_0  + q_1  R$.
The main part of the proof consists in showing, by induction on $t\geq t_1$, the following property:
	\begin{equation*}
		\forall i\in V\setminus S_Z(t), \ \ C_i(t) \geq t-t_1 .
	\end{equation*}
The base case $t = t_1$ is obvious. 
Suppose now that the above property holds at round~$t \geq t_1$,
	 and that $S_Z(t+1) \neq V$.
Let us fix some node $j \notin S_Z(t+1)$; we are going to show that
	\begin{equation}\label{eq:lemC}
	C_j(t+1) \geq t +1 -t_1 .
	\end{equation}
Lemma~\ref{lem:Ci} shows that either (a) $\!C_j(t+1) \! = \!1 + C_{\nmin{j}{t+1}}\!\!(t) \!$ or 
	(b) $\!C_j(t) \! = \! PM_j(t) - 1$.
In case~(a), the inequality~(\ref{eq:lemC}) follows from the inductive assumption 
	since  $\nmin{j}{t+1} \notin S_Z(t)$.
In case (b), Lemma~\ref{lem:Cuppbound} implies that $P M_j(t) = C_j(t) + 1 < P \MZ + R +1 $
	since $t\geq t_1 \geq t_0 + R$ . 
Moreover, it holds that 
	\begin{equation*}
	M_j(t) \geq \tilde{M}(t) \geq \tilde{M}(t_1) \geq g^{q_1} (\MZ) \geq  g^{q_1} (0) \geq   \MZ + \frac{R+1}{P} .
	\end{equation*}
The first inequality is due to the fact that $ j= j^+(t) \notin S_Z(t)$,  the second  one holds by
	Lemma~\ref{lem:Mtilde}, and the third one  is a consequence of Lemma~\ref{lem:Mlowbound}.
The forth inequality is due to the fact that $g$ is non-decreasing, and the last one is by 
	definition of $q_1$.
Thus case (b) cannot occur, which completes the proof of~(\ref{eq:lemC}).

To complete the proof of the theorem, we proceed by  contradiction, 
	and we  assume that some node $i$ is not $Z$-synchronized 
	in round~$t_2 = t_1 + P\MZ + R$.
Then, we would obtain both 
	$ C_i(t_2) \geq  P\MZ + R$ by the inequality (\ref{eq:lemC})
	and $ C_i(t_2) <  P\MZ + R$ by Lemma~\ref{lem:Cuppbound} since $t_2\geq t_0 +R$.
It follows that all the nodes are $Z$-synchronized in round $t_2$.
\end{proof}

%
%
%
%
%
%
%
%

\section{Concluding Remarks}

The correctness proofs presented above also provide bounds on stabilization time and 
	space complexity of the $\MM$ and  $\SAP$ algorithms;
	see Figures~\ref{fig:time} and~\ref{fig:space}. 
In the case where $g$ is a non-decreasing and inflationary function, the time bound and the space bound for $\SAP$ 
	depend respectively  on the functions~$g^*$ and $g$, leading thus to a time-space trade-off for choosing~$g$: 
	the more inflationary $g$ is, lower the time complexity is, and higher its space complexity is.
In both tables, we have included the complexity results of the $SynchMod_{_P}$ algorithm~\cite{PMCBM:sss:21} that
	solves the \emph{mod-$P$ Firing Squad} problem -- a problem which is similar  to mod-$P$ synchronization 
	with the additional requirement that each node is aware when synchronization is reached --
	in a dynamic networked system with a finite radius and a given bound on the radius.
For a more precise comparison with the algorithms presented in this paper, it is important 
	to note that the  $SynchMod_{_P}$ algorithm works only in the model of asynchronous starts.
	
In the case of infinite memory, the self-stabilizing $\MM$ algorithm is a very robust solution which works 
	with highly dynamic communication graphs and under a weak connectivity assumption, namely rootedness.
With finite memory and the $\SAP$ algorithm,  this assumption has to be strengthened into uniform rootedness.
In both cases,  these connectivity properties have to hold in bounded periods of time.
Thus, these results highlight  the critical importance of timing bounds for the network to be connected enough,
	and  demonstrate how time may act as a healer.
Moreover, as exemplified by the $\FP$ and $SynchMod_{_P}$ algorithms, the initial knowledge on these 
	bounds allows for much more efficient solutions in terms of both time and space.
	
Concerning connectivity assumptions, the first scenario in Section~\ref{sec:cex} shows
	that the $\SAP$ algorithm does not work anymore when relaxing the assumption of 
	uniform rootedness  into the one of (simple) rootedness, even in the case of a non-empty center.
A natural question then arises about the possibility of designing a finite-state self-stabilizing algorithm
	that provides nodes with clocks modulo $P$ which eventually synchronize in a dynamic graph
	with a finite radius. 

\begin{figure}
\small{
\begin{tabular}{|C{3.8cm}|C{2.7cm}|C{1.7cm}|C{3.4cm}|C{2.2cm}|}
\hline  Assumptions &	$\MM$ & $SAP_{\lambda x.\left\lceil\frac{2B}{P}\right\rceil}$ & $\SAP$ & $SynchMod_{\left\lceil\frac{B}{P}\right\rceil}$ \\
\hline  $\diam(\dG)= D \leq B$   & \multirow{2}*{$2 D + h(0) $} & $3 D $ & \multirow{2}*{$\left (2 + g^*\left ( \left\lceil \frac{2D}{P}\right\rceil \right) \right )D$} 
												   &  $  4 \,P   \left\lceil \frac{B}{P}\right\rceil $\\
\cline{1-1} \cline{3-3} \cline{5-5} $\diam(\dG)= D < \infty $   & 	 & --	&	& --	\\
\hline
uniformly rooted  ($Z$) &        & \multirow{2}*{ -- }&      &   \multirow{2}*{$ 6\, P |V|  \left\lceil \frac{B}{P}\right\rceil $}\\
$\rad(\dG) =R \leq B$ & $ 2 D\!+\!  2 R\!+\! h(0) $ &  & $\!R\! \left(1\!+\! g^*\!\!\left ( M\! +\! \left\lceil  \frac{2+R}{P}\right\rceil \right ) \right) +$ &  \\
\cline{1-1} \cline{3-3} \cline{5-5} 
uniformly rooted   ($Z$) & \tiny{with $h(0) = \max h_i(0)$}	 & \multirow{2}*{--}	& $ PM + \!(2 \!+\!g^*\!\left ( \left\lceil \frac{2D}{P}\right\rceil \right) D$ 	& \multirow{2}*{--} \\
$\rad(\dG)\! =\!R ,\diam(Z) \!= \!D$  &     & & \multirow{2}*{\tiny{with $\left\{\begin{array}{l}
                                                                                                           \!\!\! M = g^{ T}(\max M_i(0) )\\
                                                                                                          \! \!\! T = (2 + g^*\left ( \left\lceil \frac{2D}{P}\right\rceil \right) D
                                                                                                            \end{array}\right.$}} &  \\
   &  & &  &  \\
\hline								 
\end{tabular}} 
\caption{Bounds on stabilization time of four  clocks.}\label{fig:time}		
\end{figure}

\begin{figure}	
\small{
\begin{tabular}{|C{3.8cm}|C{1.7cm}|C{1.7cm}|C{3.4cm}|C{2.2cm}|}
\hline  Assumptions &	$\MM$ & $SAP_{\lambda x.\left\lceil\frac{2B}{P}\right\rceil}$ & $\SAP$ & $SynchMod_{\left\lceil\frac{B}{P}\right\rceil}$ \\
\hline  
\multirow{2}*{$\diam(\dG)= D \leq B$ }  &  \multirow{4}*{ }	 & \multirow{2}*{$ \left\lceil\frac{2B}{P}\right\rceil P$ } & \multirow{2}*{  } &  
				\multirow{2}*{$ B $ } \\
	&$\infty$	&	& $(P+1) g^{ T}(\max M_i(0) )$	&	\\										
\cline{1-1} \cline{3-3} \cline{5-5} 
\multirow{2}*{$\diam(\dG)= D < \infty $ }  & 	 & \multirow{2}*{--} & \tiny{ with $T = (2 + g^*\left ( \left\lceil \frac{2D}{P}\right\rceil \right) D $}  & --	\\
     &  &  &  &  \\
\hline								 
\end{tabular}} 
\caption{Memory bounds of four clocks (in the case of a  finite diameter).}\label{fig:space}	
\end{figure}

\paragraph{Acknowledgements:} We would like to thank Stephan Merz, Patrick Lambein-Monette,
	and Guillaume Pr\'emel  for very useful discussions.
We are also indebted to  Paolo Boldi and Sebastiano Vigna for their very deep and inspiring work on self-stabilization.

\bibliographystyle{plain}

\end{document}